\documentclass[a4paper,11pt]{article}
\pdfoutput=1 

\usepackage[utf8]{inputenc}
\usepackage{jheppub}
\usepackage{natbib}

\usepackage{amssymb,amsmath,amsbsy,amsthm,mathtools}
\usepackage{braket}

\newtheorem{definition}{Definition}[section]
\newtheorem{proposition}{Proposition}[section]

\newtheorem{theorem}{Theorem}[section]

\theoremstyle{definition}
\newtheorem{example}{Example}[section]

\theoremstyle{definition}

\usepackage{todonotes}
\usepackage{graphicx,color}
\usepackage{multirow}
\usepackage{hyperref}
\usepackage{subcaption}
\newcommand{\mA}{\mathcal{A}}

\newcommand{\mH}{\mathcal{H}}

\newcommand{\mW}{\mathcal{W}}

\newcommand{\mbD}{\mathbb{D}}

\newcommand{\lp}{\left(}
\newcommand{\rp}{\right)}

\newcommand{\ep}{\varepsilon}

\newcommand{\p}{\partial}

\DeclareMathOperator{\arccosh}{arccosh}

\title{Phase transitions and uberholography of holographic pure-state geometries}

\author[a,b]{Ning Bao}
\author[a]{Keiichiro Furuya}
\author[a]{Jacob March}

\affiliation[a]{Department of Physics, Northeastern University, Boston, MA, 02115, USA}
\affiliation[b]{Computational Science Initiative, Brookhaven National Laboratory, Upton, NY 11973 USA}

\emailAdd{ningbao75@gmail.com}
\emailAdd{k.furuya@northeastern.edu}
\emailAdd{j.march@northeastern.edu}

\makeatletter
\gdef\@fpheader{}
\makeatother

\abstract{
We study the error-correcting properties of pure-state holographic geometries, in which mixed boundary subregions are replaced, via the surface/state correspondence, by the Ryu--Takayanagi (RT) geodesic bounding their entanglement wedges. 
In AdS$_3$/CFT$_2$ we derive a cross-ratio threshold relation $\eta'/\eta = e^{\Delta H/2}$ for the connected/disconnected transition of the entanglement wedge when two holes are punched in such a geometry. 
The quantity $\Delta H$ is sourced entirely by geodesics ending on RT boundaries. 
It shifts the standard two-interval threshold $\eta = 1/2$, and we classify when its sign is fixed by the pattern of hole endpoints. 
Turning to code properties, we show that the recursive hole-punching underlying uberholography cannot start within an RT-boundary, while an untouched asymptotic boundary can still fractalize, and we find numerically that in the configurations we study it does so with the universal fractal dimension $\alpha \approx 0.786$.
The resulting upper bounds on price and distance are nevertheless procedure dependent. 
In the configurations we study, punching holes on the asymptotic boundary while retaining the RT-boundary yields strictly tighter bounds than first tracing out the RT-boundary and then fractalizing.
}

\begin{document}
\maketitle

\section{Introduction}

The AdS/CFT correspondence \cite{Maldacena:1997re, Witten:1998qj, Gubser:1998bc} realizes a bulk gravitational theory as a quantum error correcting code defined on its holographic boundary \cite{Almheiri:2014lwa, Pastawski:2015qua}. 
In this dictionary the bulk degrees of freedom are the logical information that is redundantly encoded into the boundary so that they survive erasure of boundary regions. 
This perspective is aided by the Ryu-Takayanagi (RT) formula and its generalizations \cite{Ryu:2006bv, Ryu:2006ef, Hubeny:2007xt, Engelhardt:2014gca} which identify the boundary entanglement entropy with the area of a minimal bulk surface. 
This underlies entanglement-wedge reconstruction \cite{Dong:2016eik} and the RT formula itself has been derived from quantum error correction \cite{Harlow:2016vwg}. The code-theoretic content of a holographic geometry can be made quantitative through its \textit{price} and \textit{distance} \cite{Pastawski:2016qrs}. 
The price is defined as the smallest boundary region whose entanglement wedge contains a given bulk subregion while the distance is defined as the smallest boundary region whose erasure removes the ability to reconstruct the bulk subregion.
These quantities obey holographic analogues of the quantum Singleton bound and exhibit \textit{uberholography} \cite{Pastawski:2016qrs}: a bulk operator can be supported on a boundary region of fractal dimension $\alpha \approx 0.786$. 

A natural setting in which to probe these structures beyond the case of a pure global state is the \textit{surface/state correspondence} \cite{Miyaji:2015yva}, a proposed generalization of holography that assigns a quantum state to any codimension-two convex bulk surface and an RT-like entropy formula to its subregions. This makes it conjecturally possible to geometrically purify a mixed boundary state.
A boundary subregion in a mixed state can be replaced by a convex curve pushing the boundary inward.
By construction these surfaces can be pushed at most to the RT surface of the given subregion.
Iteratively pushing disjoint boundary subregions as far as the correspondence allows, produces an interesting class of pure-state geometries.
The boundary of the resulting geometry is a combination of the original asymptotic boundary and RT-boundary geodesics.
Such geometries have recently been used to study holographic entanglement distillation \cite{Bao:2023til} and they provide a concrete setup in which to ask how holographic code properties are modified by purification.
They are also closely tied to the entanglement wedge cross section (EWCS) and the entanglement of purification \cite{Takayanagi:2017knl, Nguyen:2017yqw} which sets the size of the minimal purifying system. 

In this paper we develop the phase structure and code properties of pure-state geometries in AdS$_3$/CFT$_2$.
Our results are organized as follows.

In section~\ref{sec:surface_state} we review the formulation of the surface/state correspondence and introduce the pure-state geometries that we will consider throughout the rest of the paper.

In section~\ref{sec:phase_dynamics} we study when the entanglement wedge of two disjoint holes punched in a pure-state geometry is connected. 
Each hole endpoint is an anchor that is either B-type (lying on the asymptotic boundary) or b-type (lying on an RT-boundary geodesic).
Theorem~\ref{thm:main} is a cross-ratio threshold relation $\eta'/\eta = e^{\Delta H/2}$ between two M\"obius invariants of the four anchors. 
This ratio is shifted away from the standard two-interval value $\eta=1/2$ \cite{Calabrese:2009ez, Calabrese:2010he, Headrick:2010zt, Hartman:2013mia, Faulkner:2013yia} by $\Delta H$ which is determined entirely by the geodesics connecting two b-type anchors.
We show when the sign of $\Delta H$ is fixed by the cyclic patterning of anchors and when it is not.
We organize these results by anchor pattern and illustrate the mechanism with several examples. 

In section~\ref{sec:price_and_distance} we define the price and distance of a pure-state geometry with respect to the surface/state RT formula, establish their equivalence for bulk points, and discuss the resulting holographic Singleton-type bound.

In section~\ref{sec:uberholography} we examine uberholography on pure-state geometries. 
We show that the recursive hole-punching underlying uberholography cannot be completed within an RT-boundary geodesic as an RT-boundary coincides with the bulk geodesic of the punched region.
Any asymptotic boundary that is left untouched, however, still fractalizes, and we find numerically that the fractal dimension $\alpha$ is unaltered.
In contrast, the resulting upper bounds on price and distance are dependent on whether the RT-boundaries are retained or traced out before the asymptotic boundary is fractalized. 
We explore an example of two such procedures and the implications of their differences.

We conclude in section~\ref{sec:discussion} with a discussion of open directions, including approximate bulk reconstruction and the relation between the Singleton gap and the failure of Haag duality. 

\section{Surface/state correspondence and pure-state geometries}\label{sec:surface_state}

In this section, we clean up our assumptions for what follows. Our work stands mainly on surface-state correspondence \cite{Miyaji:2015yva}.

The surface-state correspondence is proposed as a generalized holography \cite{Miyaji:2015yva}. In $d+2$ dimensional AdS, one can consider a codimension two convex surface $\Sigma$. The proposal is that the boundary theory lives on the surface. Hence there is a state associated with the surface. It comes with the principle such that
\begin{equation}
    \ket{\psi^\Sigma}_{SS} \in \mH_{tot} \leftrightarrow \Sigma\in M_{d+2} \; \text{(topologically trivial)} 
\end{equation}
and 
\begin{equation}
    \rho^\Sigma \in End(\mH_{tot}) \leftrightarrow \Sigma\in M_{d+2} \; \text{(topologically non-trivial)}.
\end{equation}
Furthermore, it conjectures that the surface state satisfies the RT formula, i.e.,
\begin{equation}\label{eq:RT}
    S(\rho^\Sigma_{\tilde{R}}) = \frac{Area(\gamma^\Sigma_{\tilde{R}})}{4G_N}.
\end{equation}


We call \textit{pure-state geometries} those that are topologically trivial and correspond to a pure state. In particular, we study the phase dynamics and code properties of entanglement wedges on a constant time slice, which are specific pure-state geometries, via the surface-state correspondence. 

Before we move on to the following section, we motivate the study of such holographic pure-states on the entanglement wedges and raise a few questions. The motivation is to understand how bulk logical information is encoded into the boundary of entanglement wedges to study holographic quantum error correction codes defined by a holographic mixed state and its purified state. 

For a holographic code on the asymptotic boundary of AdS$_{d+1}$, one can define an entanglement wedge corresponding to a boundary subregion. A boundary subregion can reconstruct the logical information in a bulk subregion if it is inside the entanglement wedge of the boundary subregion. 

For the holographic code on the entanglement wedge on which the state $\ket{\psi}_{SS}$ is assigned, entanglement wedges can be defined by the RT formula in \eqref{eq:RT}. A bulk geodesic of a segment of an RT surface stays on the surface, see figure \ref{fig:no_supadd}. Thus, it cannot reconstruct bulk logical information that does not intersect with the segment. However, together with the RT surface and the boundary subregion, especially when the entanglement wedge is the connected phase, provides much more access to the bulk logical information. That is, the wedge\footnote{The union of the segments of RT surfaces and the EWCS was called a kinked RT surface in \cite{Hayden:2021gno}.} bounded by the boundary subregion, segments of RT surfaces and the EWCS is larger than the entanglement wedge of a boundary subregion. Thus, understanding the role of RT surfaces is essential to explore the holographic encoding of bulk logical information into a pure state on the boundary of the entanglement wedges. Such a situation arises in the purification of a holographic mixed state. 

Several purifications of a holographic mixed state have been studied, such as a canonical purification \cite{Dutta:2019gen,Sorce:2025usc} and the optimal purification via entanglement of purification \cite{Terhal:2002riz,Takayanagi:2017knl,Caputa2019hologrpahicEoP,Nguyen:2017yqw}\footnote{We refer the reader to the cited works and references therein, while acknowledging that the literature on this topic is much broader than can be fully covered here.}. In \cite{Takayanagi:2017knl}, the entanglement of purification $E_p(A:B)$ of a reduced state on the boundary, for instance, $\rho_{AB}$, has been conjectured to be equivalent to the entanglement wedge cross section $EWCS(A:B)$. The paper \cite{Bao:2018zab} proposed a holographic minimal purification, where, for example, a mixed state $\rho_{AB}$ is purified with a purifying system of minimal size, optimizing $E_p(A:B)$. It also argued based on the bit-threads approach \cite{Du:2019emy,Bao:2018zab} that, for the conjecture $E_p=E_W$ to hold, $A^*$ and $B^*$ must reside on the segments of the RT surface of the connected entanglement wedge of $AB$. This suggests that the holographic purification via $E_p$ can be done geometrically.

We now ask the following question. Suppose we consider a geometric minimal purification for a mixed state $\rho_{AB}$ and denote by $\ket{\psi_{AB(AB)^*}}_{E_p}$ the pure-state state obtained by the purification procedure. Does the state $\ket{\psi}_{E_p}$ satisfy the RT formula for any proper subregion $\tilde{R}$? We can also ask if there is a state $\ket{\psi}_{SS}$ on the boundary of an entanglement wedge induced from the surface-state correspondence, which can also geometrically optimize the corresponding entanglement of purification. In this paper, we do not resolve these questions.

\section{Phase dynamics on a pure-state geometry}\label{sec:phase_dynamics}

We classify bipartite phase transitions in pure-state geometries in AdS${}_3$/CFT${}_2$, comparing to the standard AdS/CFT results seen in \cite{Calabrese:2009ez, Calabrese:2010he, Headrick:2010zt, Hartman:2013mia}. 
Throughout we work on a static time slice of pure AdS$_3$, so extremal surfaces are geodesics of the hyperbolic disk and the RT rather than HRT prescription applies. 
 
\subsection{Cross-ratio threshold relation}
We work in the Poincar\'e disk $\mathbb{D} = \{z \in \mathbb{C} : |z| < 1\}$ with the hyperbolic metric
\begin{equation}
    ds^2 = \frac{4|dz|^2}{(1 - |z|^2)^2}.
\label{eq:metric}
\end{equation}
The induced hyperbolic distance between two points $z_1, z_2 \in \mathbb{D}$ is
\begin{equation}
    d(z_1, z_2) = \arccosh\lp1 + 2\rho_{12}\rp, 
    \qquad 
    \rho_{ij} := \frac{|z_i - z_j|^2}{(1 - |z_i|^2)(1 - |z_j|^2)},
    \label{eq:hypdist}
\end{equation}
and bulk geodesics are arcs of circles orthogonal to $\partial\mathbb{D}$ (diameters included). The AdS$_3$ radius is set to unity, so all lengths and the cutoff $\varepsilon$ below are dimensionless. 
 
We start with a given pure-state geometry where we have pushed disjoint open boundary intervals inward to their RT geodesics.
The resulting subsystem boundary $\Sigma$ is a closed curve in $\bar{\mathbb{D}}$ made of original-boundary arcs on $\partial\mathbb{D}$ and RT-boundary geodesics, meeting at corners. 
A punched out hole is an open connected subset of $\Sigma$ that is traced out.
Two disjoint holes produce four anchors $z_1, \ldots, z_4$ in cyclic counter-clockwise order around $\Sigma$, with $\{z_1, z_2\}$ and $\{z_3, z_4\}$ the hole endpoints. 
The remaining subsystem boundary consists of two un-punched arcs (from $z_2$ to $z_3$ and from $z_4$ to $z_1$). 
Each anchor is either \textit{B-type} (on the original boundary, $|z_i| = 1$) or \textit{b-type} (on an RT-boundary, $|z_i| < 1$) and we will refer to geodesics that end on two B-type (b-type) points as \textit{BB} (\textit{bb}) and for mixed anchors \textit{Bb}.
See figure~\ref{fig:setup}.
Corners, where RT-boundary geodesics meet the asymptotic boundary or each other, are isolated points on $\Sigma$, and we assume throughout that no anchor sits exactly at a corner, so every anchor has a definite type.
 
\begin{figure}[ht]
    \centering
    \includegraphics[width=0.5\linewidth]{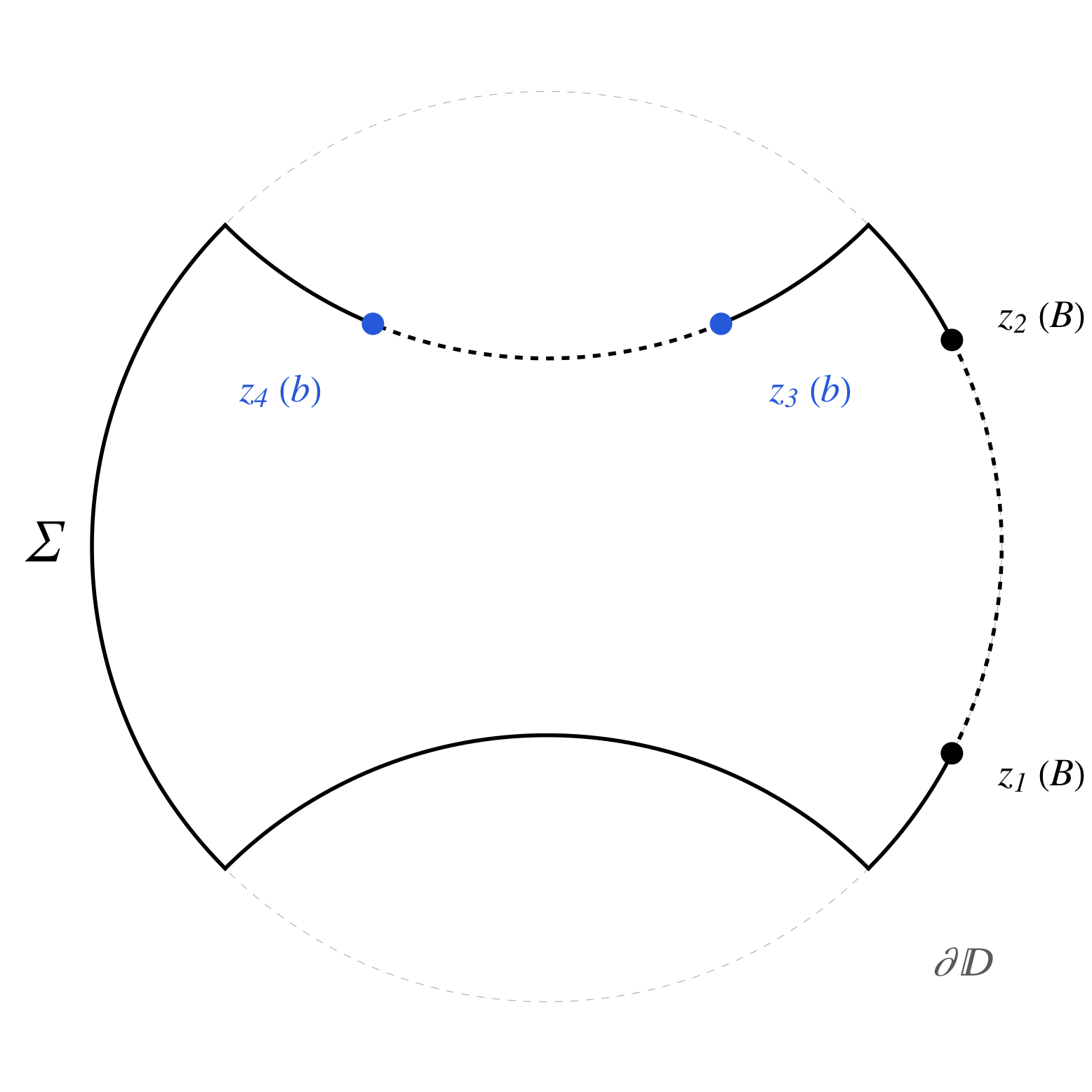}
    \caption{A pure-state geometry $\Sigma$ in $\mathbb{D}$ after a bipartite partial trace. 
    Anchors are labeled in cyclic counter-clockwise order. B-type anchors lie on $\partial\mathbb{D}$, b-type anchors lie on RT-boundary geodesics.
    Dashed segments indicate traced-out regions where holes have been punched.
    }
    \label{fig:setup}
\end{figure}
 
In order to regularize our boundary anchored geodesics, we push each boundary anchor inward to $|z| = 1-\ep$ with $\ep > 0$ small. 
Then, to leading order, we have

\begin{equation}
    1 - |z|^2 \;\longmapsto\; 2\varepsilon \qquad (z \in \partial\mathbb{D}).
    \label{eq:cutoff}
\end{equation}
For each pair of hole endpoints $(z_i, z_j)$ let $n_B(z_i, z_j) \in \{0, 1, 2\}$ count its number of B-type anchors.
Under \eqref{eq:cutoff}, the hyperbolic distance $d(z_i, z_j)$ is finite for $n_B = 0$ and diverges as $-n_B \log\ep$ for $n_B\ge 1$.
Its asymptote is $d(z_i, z_j) \to \log (4\rho_{ij})$ as $\rho_{ij}\to\infty$.
 
By the RT prescription, the geodesics replacing the holes are homologous to the un-punched arcs \cite{Ryu:2006bv, Ryu:2006ef, Headrick:2007km}. The bulk wedge bounded by un-punched arcs and RT geodesics is connected or disconnected according to the geodesic pairing:
\begin{align}
    L_{\mathrm{conn}} &= d(z_1, z_2) + d(z_3, z_4), \\
    L_{\mathrm{disc}} &= d(z_1, z_4) + d(z_2, z_3).
\end{align}
Under $L_{\mathrm{conn}}$, arcs and geodesics assemble into one closed loop bounding a single wedge. 
Under $L_{\mathrm{disc}}$, each un-punched arc is capped by its own geodesic, bounding two disjoint wedges. 
See figure~\ref{fig:channels}. 
These two non-crossing pairings exhaust the candidate minimal surfaces. 
The third possible pairing of $(z_1, z_3)$ and $(z_2, z_4)$ has crossing geodesics.
Rerouting the two surfaces at the crossing produces kinked curves in the non-crossing homology classes whose lengths are strictly reduced by smoothing, so the crossing pairing is never minimal \cite{Headrick:2007km}.
Moreover, each candidate geodesic lies within the pure-state geometry, since the surface/state correspondence leaves the bulk metric unaltered and $\Sigma$ is geodesically convex, as required by the surface/state correspondence \cite{Miyaji:2015yva}.
Each $L$ may diverge individually, but every anchor appears in one pair of each channel, so the $-n_B \log\ep$ contributions cancel pairwise for $L_\text{disc} - L_\text{conn}$, which is finite as $\ep \to 0$.
In AdS$_{3}$ the extremal surfaces are geodesics, so by \eqref{eq:RT} the entanglement entropy of the un-punched arcs is 
$\min(L_\text{conn}, L_\text{disc})/4G_N$. 
The RT prescription selects this minimum, with the phase transition at $L_\text{conn} = L_\text{disc}$ \cite{Ryu:2006bv, Ryu:2006ef, Headrick:2010zt, Hartman:2013mia}.
 
\begin{figure}[ht]
    \centering
    \includegraphics[width=1.\linewidth]{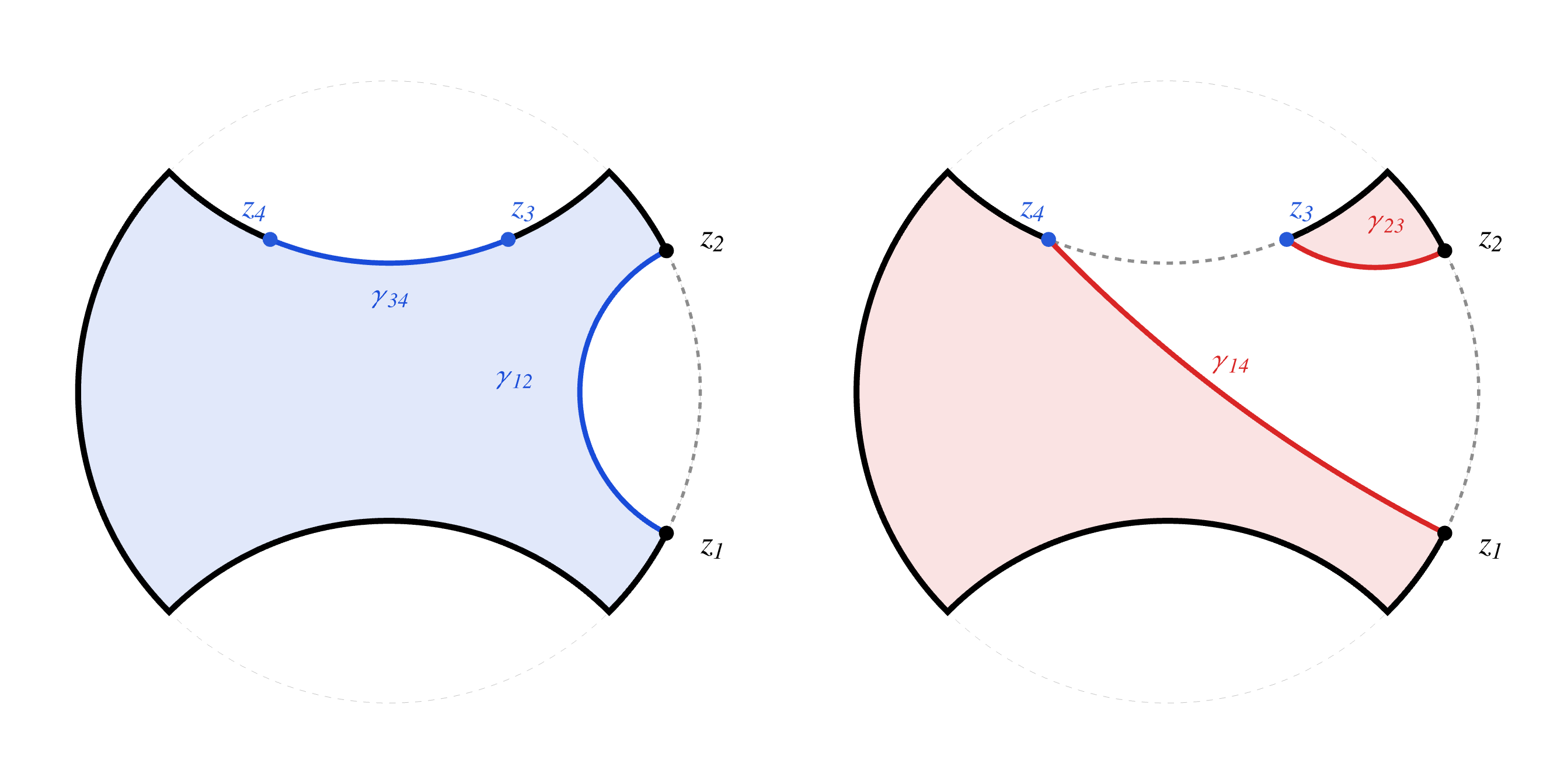}
    \caption{
    The two non-crossing pairings and their bulk wedge topologies. 
    Left panel: connected channel $L_{\mathrm{conn}} = d(z_1, z_2) + d(z_3, z_4)$, with both geodesics bridging the un-punched arcs.
    In this panel geodesic $\gamma_{12}$ is BB while $\gamma_{34}$ is bb. 
    Right panel: disconnected channel $L_{\mathrm{disc}} = d(z_1, z_4) + d(z_2, z_3)$, with each geodesic capping one un-punched arc.
    Here both geodesics are Bb.
    }
    \label{fig:channels}
\end{figure}
 
Define the cross-ratios
\begin{equation}
    \eta := \frac{|z_1 - z_2| \, |z_3 - z_4|}{|z_1 - z_3| \, |z_2 - z_4|}, \qquad
    \eta' := \frac{|z_1 - z_4| \, |z_2 - z_3|}{|z_1 - z_3| \, |z_2 - z_4|}.
    \label{eq:eta}
\end{equation}
These are M\"obius-invariant under the bulk isometry group $PSU(1,1).$\footnote{
$PSU(1,1)$ acts on $\bar{\mathbb{D}}$ as disk-preserving M\"obius transformations and on $\p \mbD$ as the boundary CFT global conformal group. 
For four cyclically-ordered boundary points it has just enough parameters to gauge away all but $\eta$.
Ptolemy's inequality  $|z_1 - z_3|\,|z_2 - z_4| \leq |z_1 - z_2|\,|z_3 - z_4| + |z_1 - z_4|\,|z_2 - z_3|$ is saturated, fixing $\eta' = 1-\eta$ \cite{Calabrese:2009ez, Headrick:2010zt}.
Interior anchors break saturation in general, leaving $\eta'$ no longer determined by $\eta$.
For \textit{concyclic} interior anchors saturation is recovered ($\eta + \eta' = 1$) as used in examples~\ref{ex:HH} and \ref{ex:rect}} 
 
It will also be useful to define a quantity with distinct quantitative behavior for different endpoint configurations. 
For this we introduce $h(\rho)$, the difference between the hyperbolic distance and its large-$\rho$ asymptote:
\begin{equation}
    h(\rho_{ij}) :=d(z_i, z_j) - \log (4\rho_{ij}).
    \label{eq:hfun}
\end{equation}
This function is shown in figure~\ref{fig:hplot}. 
Note that $h(\rho_{ij})$ is strictly positive and strictly decreasing on $\rho_{ij} \in(0, \infty)$ and  $h(\rho_{ij}) \to +\infty$ as $\rho_{ij} \to 0^+$\footnote{Explicitly, $h'(\rho)=(\rho^2 + \rho)^{-1/2} - \rho^{-1} < 0$ on $\rho\in (0, \infty)$ and $h(\rho)\to0$ as $\rho \to \infty$ so monotonicity implies positivity.}. 
Note that in the limit $\ep \to 0$ we have that $\rho_{ij}$ diverges if and only if $n_B(z_i, z_j) \ge 1$ (i.e. $\gamma_{ij}$ is not bb). Since $h(\rho) \to 0$ as $\rho \to \infty$, the contribution of a non-bb pair vanishes in this limit, thus
\begin{equation}
    h(\rho_{ij}) = 
    \begin{cases} 
        \arccosh(1 + 2\rho_{ij}) - \log(4\rho_{ij}) & n_B(z_i, z_j) = 0, \\ 
        0 & n_B(z_i, z_j) \geq 1, 
    \end{cases}
    \label{eq:hpair}
\end{equation}
 
\begin{figure}[ht]
    \centering
    \includegraphics[width=0.6\linewidth]{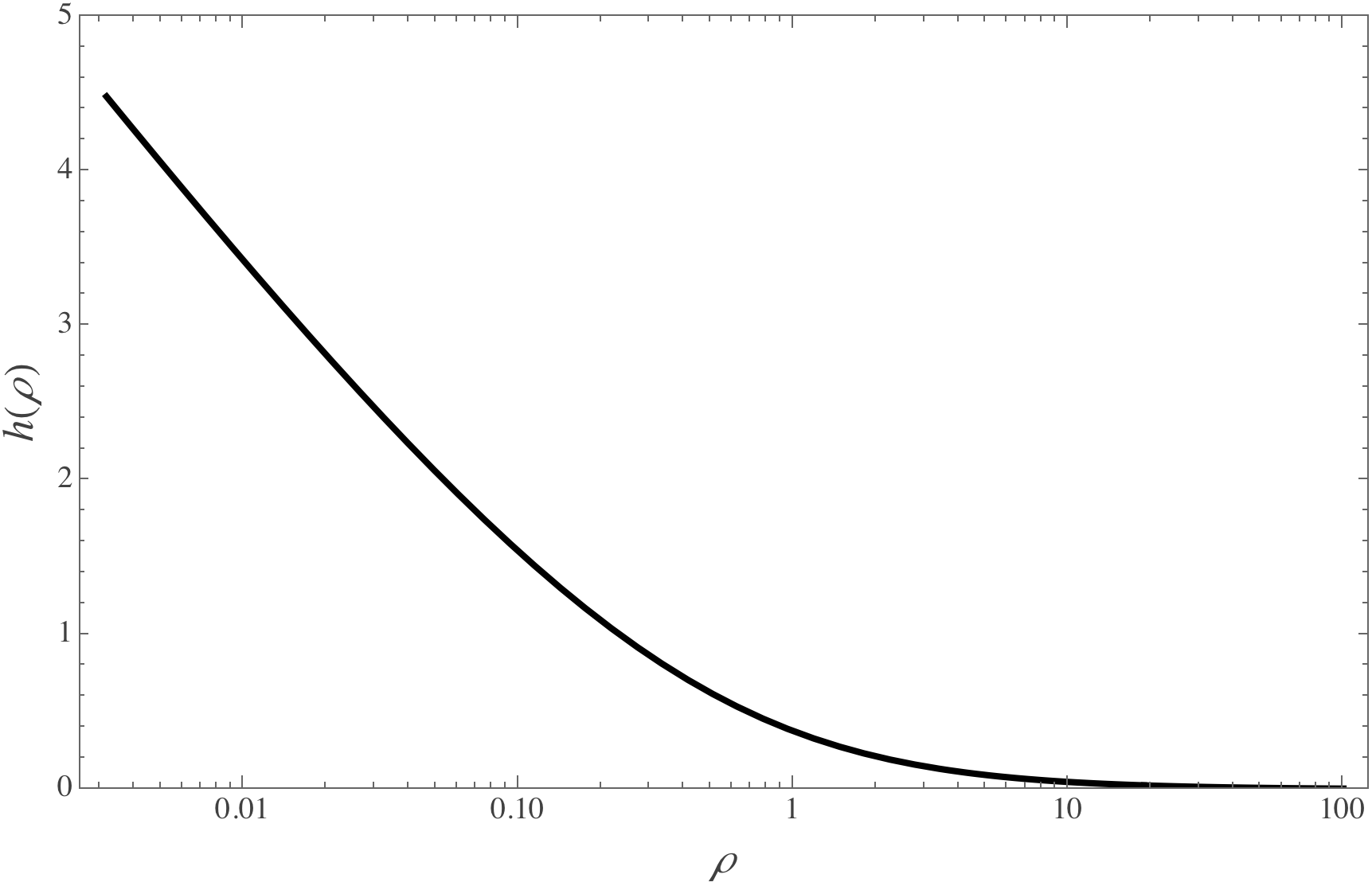}
    \caption{The function $h(\rho) = \arccosh(1 + 2\rho) - \log(4 \rho)$, the difference between the hyperbolic distance and its large-$\rho$ asymptote. Strictly positive and strictly decreasing on $(0, \infty)$, with asymptote $h(\rho) \to 0$ as $\rho \to \infty$ and divergence as $\rho \to 0^+$.}
    \label{fig:hplot}
\end{figure}
 Let $H := \sum_{(i,j)} h(\rho_{ij})$ for a given connected or disconnected channel such that $H_\text{conn} = h(\rho_{12}) + h(\rho_{34})$ and $H_\text{disc} = h(\rho_{14}) + h(\rho_{23})$.

\begin{theorem}[Cross-ratio threshold relation]
    \label{thm:main}
    Let two disjoint holes be punched on a pure-state geometry, producing four distinct anchors, none at a corner of $\Sigma$. 
    Define $\Delta H :=  H_{\text{conn}} - H_\text{disc}$, computed via~\eqref{eq:hpair}.
    Then in the limit $\ep\to 0$ the entanglement wedge of the un-punched boundary regions is connected if and only if 
    \begin{equation}
            \frac{\eta'}{\eta} > e^{\Delta H/2},
        \label{eq:main}
    \end{equation}
    and disconnected if the inequality is reversed, with the phase transition at equality. 
    In the all-boundary case (i.e. all anchors are B-type), we have $\eta'=1-\eta$ and $\Delta H=0$, recovering the standard threshold $\eta < 1/2$ for the connected phase \cite{Headrick:2010zt, Hartman:2013mia}.
\end{theorem}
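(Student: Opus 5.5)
We reduce the phase criterion $L_{\mathrm{conn}}<L_{\mathrm{disc}}$ to a statement about the cross-ratios by writing every geodesic length as its large-$\rho$ asymptote plus the correction $h$. By the definition \eqref{eq:hfun}, for every pair of anchors we have exactly
\begin{equation}
    d(z_i,z_j) = \log(4\rho_{ij}) + h(\rho_{ij}).
\end{equation}
Summing over each channel gives $L_{\mathrm{conn}} = \log(16\rho_{12}\rho_{34}) + H_{\mathrm{conn}}$ and $L_{\mathrm{disc}} = \log(16\rho_{14}\rho_{23}) + H_{\mathrm{disc}}$. Hence
\begin{equation}
    L_{\mathrm{disc}}-L_{\mathrm{conn}} = \log\frac{\rho_{14}\rho_{23}}{\rho_{12}\rho_{34}} - \Delta H .
\end{equation}
This identity holds at finite cutoff, before any limit is taken.

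The key step is that the conformal factors cancel in the ratio of $\rho$'s. Each anchor appears exactly once in each channel, so every factor $(1-|z_i|^2)$ appears once in the numerator and once in the denominator. What remains is
\begin{equation}
    \frac{\rho_{14}\rho_{23}}{\rho_{12}\rho_{34}} = \frac{|z_1-z_4|^2|z_2-z_3|^2}{|z_1-z_2|^2|z_3-z_4|^2} = \left(\frac{\eta'}{\eta}\right)^2,
\end{equation}
since the common denominator $|z_1-z_3||z_2-z_4|$ of \eqref{eq:eta} drops out. This shows that $L_{\mathrm{disc}}-L_{\mathrm{conn}} = 2\log(\eta'/\eta) - \Delta H$. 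In this form the cutoff-divergent pieces are manifestly absent, which makes the finiteness claim in the text precise.

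Next we take $\ep\to 0$. B-type anchors sit at $|z|=1-\ep$, and their positions converge to points on $\partial\mbD$, so $\eta$ and $\eta'$ converge to the cross-ratios of the limiting anchors. These limits are finite and nonzero because the four anchors are distinct. For a bb pair, $\rho_{ij}$ does not depend on $\ep$, so $h(\rho_{ij})$ is fixed. For any pair with $n_B\ge1$, $\rho_{ij}\sim 1/\ep\to\infty$, so $h(\rho_{ij})\to 0$ by the monotonicity and asymptotics of $h$ stated after \eqref{eq:hfun}. This yields exactly the values in \eqref{eq:hpair}, and $\Delta H$ converges to the value computed from them.

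The RT prescription discussed above selects the connected channel precisely when $L_{\mathrm{conn}}<L_{\mathrm{disc}}$. That is equivalent to $2\log(\eta'/\eta)>\Delta H$, i.e. to \eqref{eq:main}. The reversed inequality gives the disconnected phase, and equality is the transition.

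In the all-boundary case every pair has $n_B=2$, so $\Delta H=0$. All four points are concyclic on $\partial\mbD$ in cyclic order, so Ptolemy's equality holds and $\eta'=1-\eta$. Condition \eqref{eq:main} then becomes $(1-\eta)/\eta>1$, i.e. $\eta<1/2$.

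The one delicate point is the limit: the strict inequality for fixed $\ep$ must be interpreted as a strict inequality between the limiting quantities. Since $L_{\mathrm{disc}}-L_{\mathrm{conn}}$ converges to the finite limit $2\log(\eta'/\eta)-\Delta H$, its sign for small $\ep$ agrees with the sign of this limit whenever the limit is nonzero. This is why the theorem is phrased in the $\ep\to0$ limit, with the boundary case at equality identified as the transition.
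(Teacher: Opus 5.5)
Your proposal is correct and follows the paper's proof essentially step by step. You use the split $d(z_i,z_j)=\log(4\rho_{ij})+h(\rho_{ij})$, cancel the conformal factors so that the $\rho$-ratio equals $(\eta'/\eta)^2$, obtain the finite-cutoff identity $L_{\mathrm{disc}}-L_{\mathrm{conn}}=2\log(\eta'/\eta)-\Delta H$, and then take $\ep\to0$ so that only bb pairs survive in $\Delta H$. Your remark that the sign at small $\ep$ matches the sign of the nonzero limit makes explicit what the paper leaves implicit. One minor nit: for $n_B=2$ one has $\rho_{ij}\sim 1/\ep^2$ rather than $1/\ep$, though only the divergence matters.
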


\noindent\textit{Proof.}  We work first at finite $\ep$, with the B-type anchors at their regulated positions. By \eqref{eq:hypdist} and \eqref{eq:hfun}, the total channel lengths split as 
\begin{equation}
  L_\text{conn} =  \log (4\rho_{12}) + \log (4\rho_{34}) + H_\text{conn},
  \qquad 
  L_\text{disc} =  \log (4\rho_{14}) + \log (4\rho_{23}) + H_\text{disc}.
\end{equation}
Subtracting gives
\begin{equation}
    L_{\mathrm{disc}} - L_{\mathrm{conn}} = \log\frac{\rho_{14}\rho_{23}}{\rho_{12}\rho_{34}} + (H_{\mathrm{disc}} - H_{\mathrm{conn}}).
    \label{eq:cancel}
\end{equation}
 Writing $f(z):= 1-|z|^2$ (for a regulated B-type anchor this equals $2\ep$ to leading order as $\ep\to0$), we have $\rho_{ij} = |z_i-z_j|^2/(f(z_i)f(z_j))$. Each $f(z_k)$ appears once in the numerator and once in the denominator of the $\rho$-ratio in \eqref{eq:cancel}, so all cancel exactly: 
\begin{equation}
    \label{eq:cr}
    \frac{\rho_{14}\,\rho_{23}}{\rho_{12}\,\rho_{34}} 
    =
    \frac{|z_1 - z_4|^2 |z_2 - z_3|^2}{|z_1 - z_2|^2 |z_3 - z_4|^2} 
    =
    \left(\frac{\eta'}{\eta}\right)^{2},
\end{equation}
where the common factor $|z_1-z_3|\,|z_2-z_4|$ of $\eta, \eta'$ cancels and the cross-ratios are evaluated at the regulated positions. 
Putting this back into \eqref{eq:cancel} gives 
\begin{equation}
    L_{\mathrm{disc}} - L_{\mathrm{conn}} = 2 \log\left(\frac{\eta'}{\eta}\right) - \Delta H.
\label{eq:Ldiff}
\end{equation}
Every term in \eqref{eq:Ldiff} has a finite limit as $\ep\to0$. 
The cross-ratios tend to their values at the unregulated anchor positions, while by \eqref{eq:hpair} only bb pairs survive in $\Delta H$.
The statement of the theorem is therefore well posed as both sides of~\eqref{eq:main} are $\ep$-independent.
In order to have a connected bulk entanglement wedge we need $L_\text{conn} < L_\text{disc}$ (reversed for disconnected) which yields \eqref{eq:main} with equality $L_\text{conn} = L_\text{disc}$ corresponding to $\eta'/\eta = e^{\Delta H/2}$.
 
\subsection{Examples and classification}\label{sec:examples}

Here we walk through some examples and discuss the patterns of phase transition behavior that emerge. 

\begin{figure}[ht]
    \centering
    \newcommand{\panelwidth}{0.22\linewidth}
    \begin{subfigure}[t]{\panelwidth}\centering
      \includegraphics[width=\linewidth]{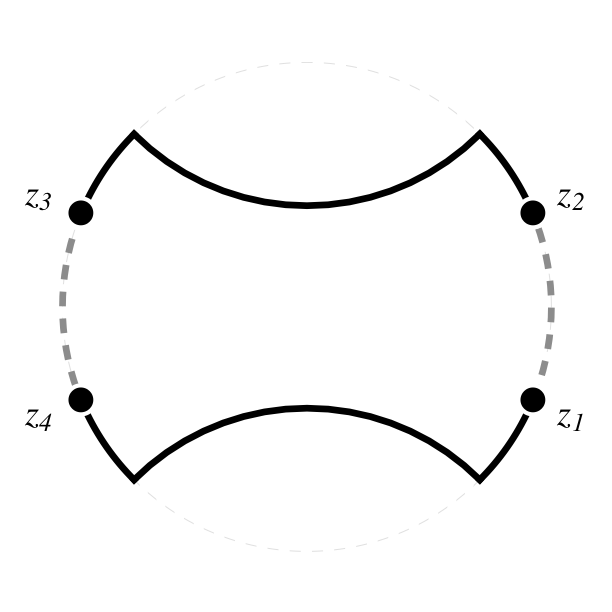}
      \subcaption{B--B--B--B}\label{fig:pat_a}
    \end{subfigure}\hfill
    \begin{subfigure}[t]{\panelwidth}\centering
      \includegraphics[width=\linewidth]{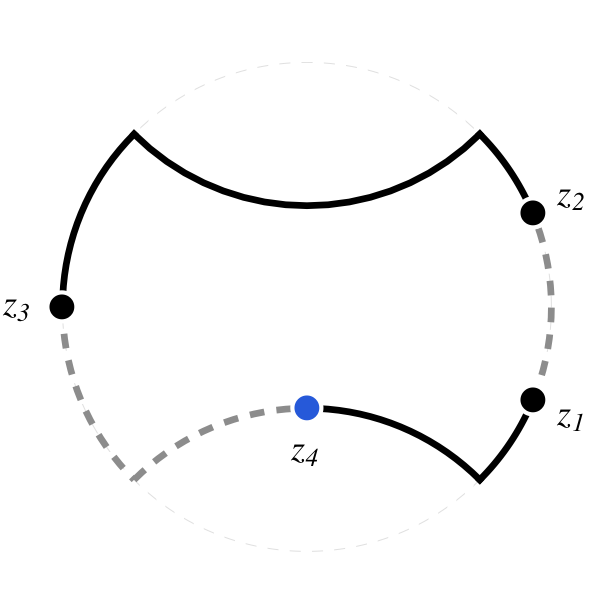}
      \subcaption{3B + 1b}\label{fig:pat_b}
    \end{subfigure}\hfill
    \begin{subfigure}[t]{\panelwidth}\centering
      \includegraphics[width=\linewidth]{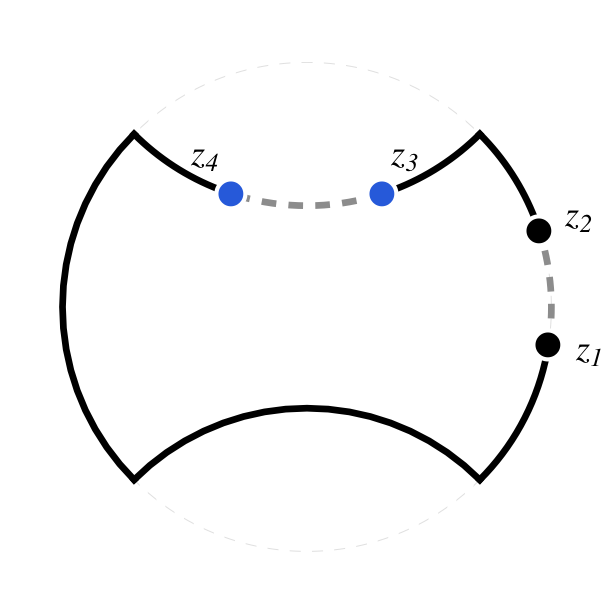}
      \subcaption{B--B--b--b}\label{fig:pat_c}
    \end{subfigure}\hfill
    \begin{subfigure}[t]{\panelwidth}\centering
      \includegraphics[width=\linewidth]{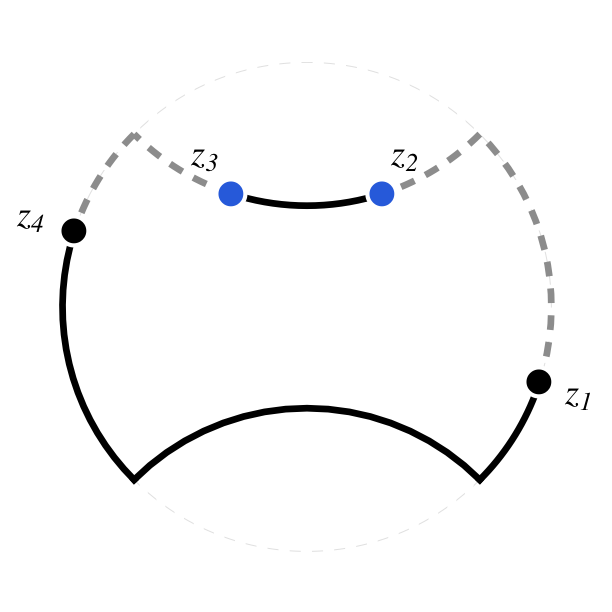}
      \subcaption{B--b--b--B}\label{fig:pat_d}
    \end{subfigure}
     
    \vspace{1em}
     
    \begin{subfigure}[t]{\panelwidth}\centering
      \includegraphics[width=\linewidth]{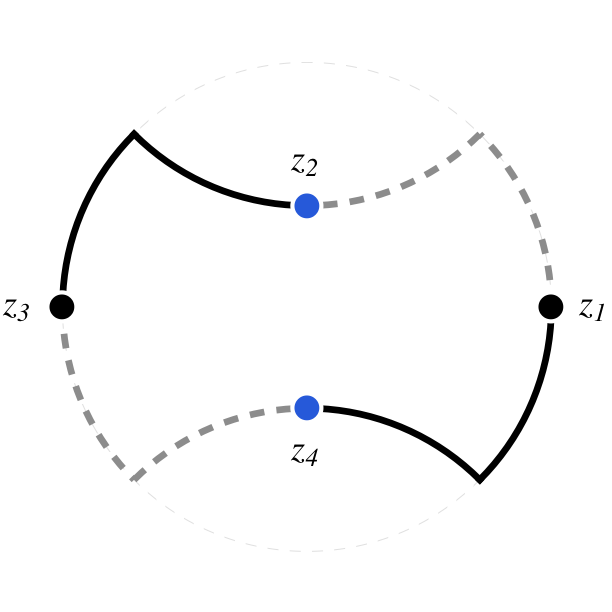}
      \subcaption{B--b--B--b}\label{fig:pat_e}
    \end{subfigure}\hfill
    \begin{subfigure}[t]{\panelwidth}\centering
      \includegraphics[width=\linewidth]{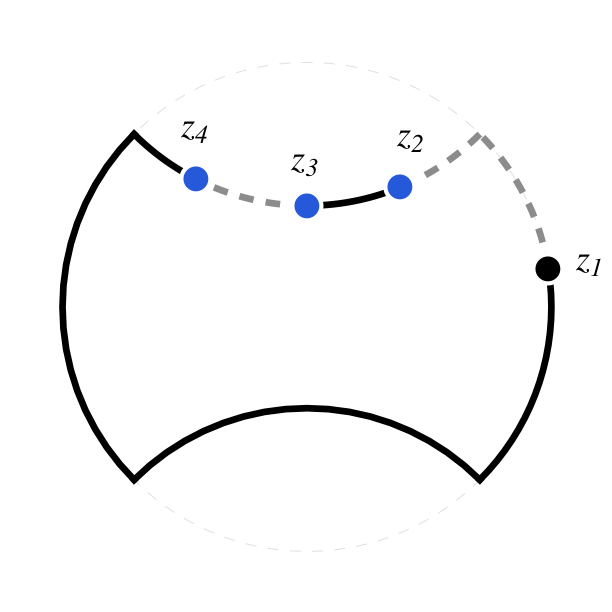}
      \subcaption{3b + 1B}\label{fig:pat_f}
    \end{subfigure}\hfill
    \begin{subfigure}[t]{\panelwidth}\centering
      \includegraphics[width=\linewidth]{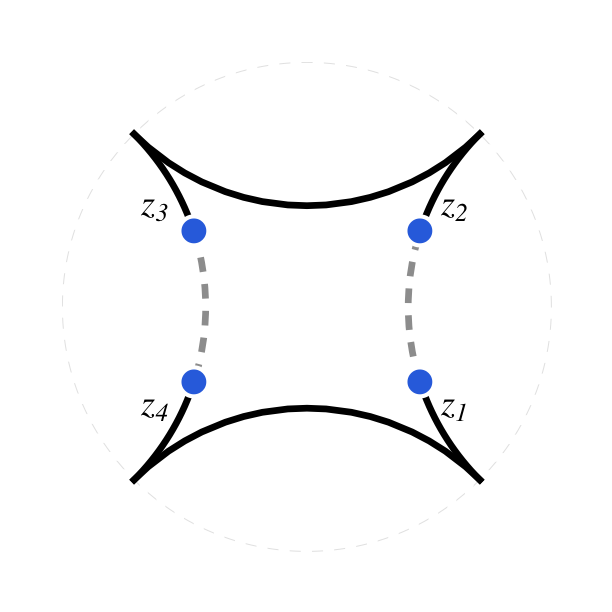}
      \subcaption{b--b--b--b (sym.)}\label{fig:pat_g}
    \end{subfigure}\hfill
    \begin{subfigure}[t]{\panelwidth}\centering
      \includegraphics[width=\linewidth]{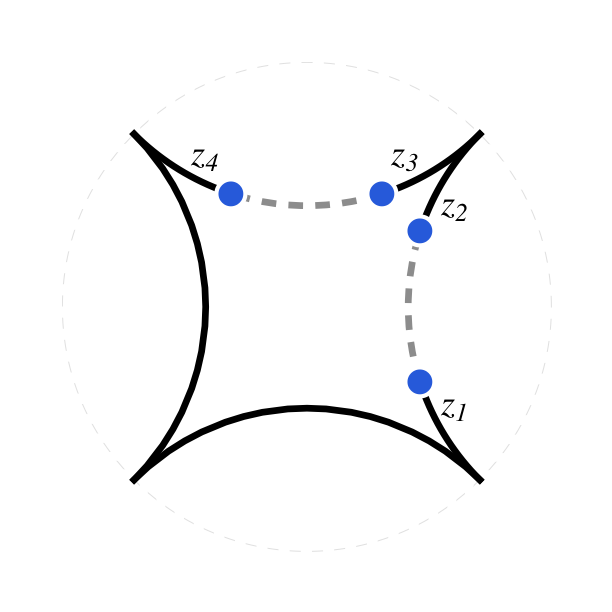}
      \subcaption{b--b--b--b (asym.)}\label{fig:pat_h}
    \end{subfigure}
     
    \caption{Cyclic anchor-type patterns at four marks on $\Sigma$. Each loop shows the cyclic anchor sequence and the hole-pair assignment $\{z_1,z_2\}\cup\{z_3,z_4\}$. The sign of $\Delta H$ for each pattern is collected in table~\ref{tab:patterns}.}
    \label{fig:patterns}
\end{figure}

\begin{example}[All-boundary anchors]
    \label{ex:HH}
    With four boundary anchors, all four pairs are BB so $H_{\mathrm{conn}} = H_{\mathrm{disc}} = 0$ and $\eta + \eta' = 1$ (anchors on $\partial\mathbb{D}$ in cyclic order saturate Ptolemy). 
    The threshold \eqref{eq:main} reduces to $\eta'/\eta = 1$, placing the transition at $\eta = 1/2$.
\end{example}
 
\begin{example}[Inscribed-rectangle family]
    \label{ex:rect}
    Take four b-type anchors M\"obius-equivalent to vertices of a rectangle inscribed in a circle in $\bar{\mathbb{D}}$. 
    The vertices are concyclic (i.e. sit on a shared circle), so $\eta + \eta' = 1$.
    By disk-preserving M\"obius conjugation we center the circumscribing circle at the origin, making the rectangle's perpendicular bisectors hyperbolic isometries. 
    Under these reflections $\rho_{12} = \rho_{34}$ and $\rho_{14} = \rho_{23}$, so $H_{\mathrm{conn}} = 2 h(\rho_{12})$ and $H_{\mathrm{disc}} = 2 h(\rho_{14})$. 
    The transition condition $\eta'/\eta = e^{\Delta H/2}$ with $\Delta H =2( h(\rho_{12}) - h(\rho_{14}))$, combined with $\eta + \eta' = 1$, gives $\eta=1/2$ exactly when $\rho_{12} = \rho_{14}$, i.e. at the square.
    This family is realized, for instance, on the symmetric entanglement arena of example~\ref{ex:arena} below with one anchor on each of the four RT-boundary geodesics and holes straddling two opposite corners, as in the symmetric b--b--b--b placement of figure~\ref{fig:patterns}(g).
\end{example}
 
\begin{example}[B--B--b--b]
    \label{ex:bbpp}
    With cyclic pattern B--B--b--b, the connected pair $(z_3, z_4)$ is bb and $(z_1, z_2)$ is BB. 
    Both disconnected pairs are Bb, so $H_\text{conn} = h(\rho_{34})$ and $H_\text{disc} = 0$, giving $\Delta H = h(\rho_{34}) > 0$. 
    The transition threshold $\eta'/\eta = e^{\Delta H/2}$ requires $\eta < \eta'$ at transition. 
\end{example}
 
\begin{example}[B--b--b--B]
    \label{ex:bppb}
    With cyclic pattern B--b--b--B ($z_1, z_4$ B-type, $z_2, z_3$ b-type), the disconnected pair $(z_2, z_3)$ is bb and $(z_1, z_4)$ is BB. 
     See figure~\ref{fig:geom_examples}(a).
    Both connected pairs are Bb, so $H_{\text{conn}} = 0$ and $H_{\text{disc}} = h(\rho_{23})$, giving $\Delta H = - h(\rho_{23})< 0$.
    The transition threshold $\eta'/\eta = e^{\Delta H/2} < 1$ requires $\eta > \eta'$ at transition.
\end{example}
 
\begin{example}[Entanglement arena: all-RT-boundary with adjacent geodesic holes]
    \label{ex:arena}
    The initial geometry here (see figure~\ref{fig:geom_examples}(b)) is what results from completing the RT purification on all intervals \cite{Miyaji:2015yva, Bao:2023til}, and we refer to it as the \textit{entanglement arena}.
    We now punch holes on two adjacent RT-boundary geodesics sharing a corner, taking the holes mirror-symmetric about the diagonal through the shared corner as in figure~\ref{fig:geom_examples}, and label so $z_2$ and $z_3$ lie closest to the corner.
    Since $z_1, z_2$ (and likewise $z_3, z_4$) lie on a single RT-boundary geodesic, the connected-channel geodesics coincide with the punched holes themselves. 
    In the connected phase the wedge remains the full arena, while in the disconnected phase it retreats to two wedges capping the un-punched arcs.
    All four pairs are bb, with
    \begin{equation}
        H_{\mathrm{conn}} = h(\rho_{12}) + h(\rho_{34}), \qquad H_{\mathrm{disc}} = h(\rho_{14}) + h(\rho_{23}).
    \end{equation}
    Reflection through the shared-corner diagonal gives $\rho_{12} = \rho_{34}$,
    so $\Delta H$ decomposes as
    \begin{equation}
        \Delta H = 2h(\rho_{12}) - h(\rho_{14}) - h(\rho_{23})
        = [\,h(\rho_{12}) - h(\rho_{14})\,]
        \;-\; [\,h(\rho_{23}) - h(\rho_{12})\,].
    \end{equation}
    Unlike example~\ref{ex:bbpp} ($\Delta H > 0$) and example~\ref{ex:bppb} ($\Delta H < 0$), the sign of $\Delta H$ for the arena is not fixed by the anchor pattern. 
\end{example}
 
\begin{figure}[ht]
    \centering
    \includegraphics[width=\linewidth]{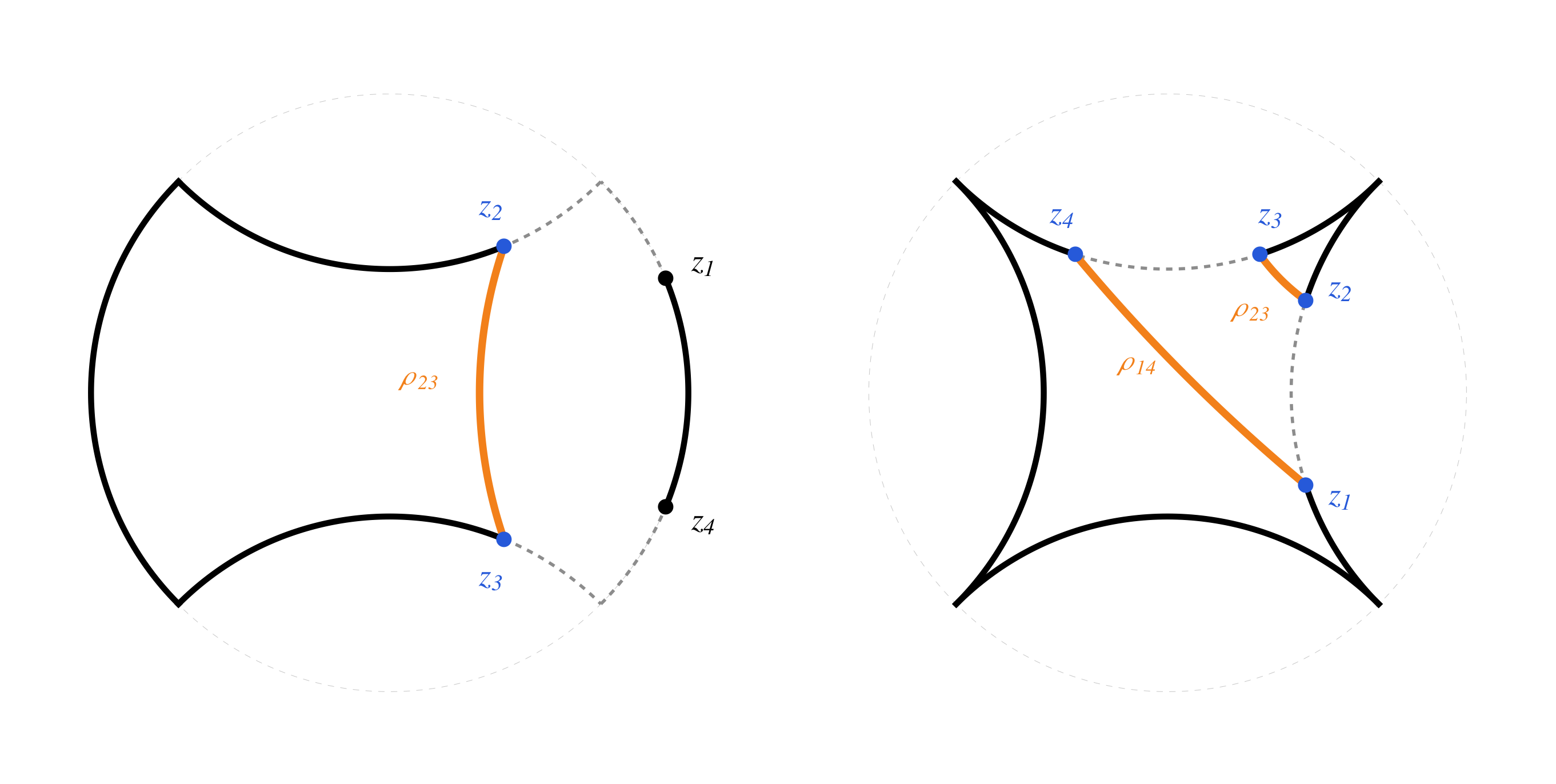}
    \caption{Two examples with bb geodesics shown in orange.
    Left: B--b--b--B, with holes straddling corners. 
    Right: the entanglement arena, with holes on adjacent RT-boundary geodesics sharing a corner. 
    In both configurations as drawn, the disconnected channel inherits the short bb geodesic $\gamma_{23}$, supplying the dominant contribution to $\Delta H$.}
    \label{fig:geom_examples}
\end{figure}

We now summarize the classification. 
Theorem~\ref{thm:main} expresses the transition as a single relation $\eta'/\eta = e^{\Delta H/2}$ between M\"obius invariants of the configuration, with $\Delta H = H_\text{conn} - H_\text{disc}$. 
The sign of $\Delta H$ sets the direction of the shift, as $\Delta H> 0$ requires $\eta'>\eta$ at transition, $\Delta H<0$ requires $\eta'<\eta$, and $\Delta H=0$ recovers the symmetric condition $\eta =\eta'$. 
When $\Delta H = 0$, the condition $\eta = \eta'$ combined with $\eta + \eta'=1$ (i.e., the anchors lie on a shared circle) places the transition at $\eta=1/2$. 
Note that neither concyclicity nor $\Delta H$ alone are sufficient for this critical $\eta$ as concyclic anchors can arise in patterns with $\Delta H \ne 0$ and vice versa.

Because every non-bb pair contributes $0$ under \eqref{eq:hpair}, only bb pairs enter
\begin{equation}
    \Delta H = \big(h(\rho_{12}) + h(\rho_{34})\big) - \big(h(\rho_{14}) + h(\rho_{23})\big),
\end{equation}
each through a strictly positive $h(\rho_{ij})$.
This is a difference of two nonnegative terms from which non-bb pairs drop out.
The anchor pattern fixes which sums vanish, while the $\rho_{ij}$ set only the magnitude of the nonzero ones.
Hence the sign of $\Delta H$ is fixed by the pattern whenever bb pairs lie in at most one channel (i.e. $\Delta H=0$ for no bb pair, $\Delta H>0$ for bb only in the connected channel, $\Delta H<0$ for bb only in the disconnected channel).
When both channels contain bb pairs, $\Delta H$ subtracts two positive sums and its sign is set by the $\rho_{ij}$.
Table~\ref{tab:patterns} collects the resulting signs of $\Delta H$ and the channel structure is illustrated in figure~\ref{fig:patterns}.
\begin{table}[ht]
\centering
    \begin{tabular}{llc}
        \textbf{Cyclic pattern} & \textbf{Sign of} $\Delta H$ & \textbf{Fig.~\ref{fig:patterns}} \\ \hline
        B--B--B--B & $\Delta H=0$ & (a) \\
        3B + 1b & $\Delta H=0$ & (b) \\
        B--B--b--b & $\Delta H>0$ & (c) \\
        B--b--b--B  & $\Delta H < 0$ & (d) \\
        B--b--B--b & $\Delta H=0$ & (e) \\
        3b + 1B & depends on $\rho$'s & (f) \\
        b--b--b--b (symmetric) & $\Delta H=0$ & (g) \\
        b--b--b--b (asymmetric) & depends on $\rho$'s & (h) \\
    \end{tabular}
    \caption{The sign of $\Delta H$ for the cyclic anchor-type patterns of figure~\ref{fig:patterns}. Rows (c) and (d) share the B--B--b--b pattern and differ by the hole assignment. The b--b--b--b pattern appears twice, for symmetric and asymmetric hole placements, to show that placement rather than the pattern alone fixes the sign there.}
    \label{tab:patterns}
\end{table}
Whether the sign of $\Delta H$ is fixed by the anchor pattern alone depends only on which channels contain bb pairs.
Up to rotation there are six anchor-type patterns on four cyclically ordered anchors (BBBB, 3B+1b, BBbb, BbBb, 3b + 1B, bbbb). 
For every pattern except BBbb the channel content of bb pair is the same for either hole assignment, while BBbb realizes two inequivalent hole assignments appearing in rows (c) and (d) of table~\ref{tab:patterns}.
Row (c) punches the holes on the BB and bb arcs, giving a bb pair in the connected channel only ($\Delta H > 0$) while row (d) punches them on the two Bb arcs, giving a bb pair in the disconnected channel only $(\Delta H < 0)$. 
The bbbb pattern is included twice in table~\ref{tab:patterns} to show that the hole placement, not the pattern, fixes the sign in that case.
Symmetric holes give $\Delta H = 0$ while asymmetric ones give $\Delta H \ne 0$ even though both channels contain the same number of bb pairs.

The B--b--b--B and entanglement arena patterns shown in figure~\ref{fig:geom_examples} share a mechanism: in both, the disconnected channel inherits a short bb geodesic supplying the dominant contribution to $\Delta H$.
b-type anchors do not introduce new phase mechanisms. They shift the threshold of the existing two-interval transition by an amount $\Delta H$ set by the channel imbalance, with the sign of $\Delta H$ fixed by the cyclic anchor pattern when bb pairs occupy a single channel, and by the competition of the $h(\rho_{ij})$ otherwise.

\section{Price and distance for surface/state correspondence}\label{sec:price_and_distance}

In this section, we redefine the price and the distance in terms of pure-state geometries in the sense of the surface/state correspondence \cite{Miyaji:2015yva}.

By the surface/state correspondence, we consider a set of pure states and a Hilbert space on the boundary of a pure-state geometry $\Sigma$. An entanglement wedge $\mW(\tilde{R})$ of a subsystem $\tilde{R}$ on the boundary $\partial \Sigma$ is given by the minimal codimension two surface determined by the $\tilde{RT}$ formula following the correspondence, as stated around \eqref{eq:RT}. Thus, the wedge is the union of the subsystem $\tilde{R}$ and the bulk geodesics anchored on the boundary of the subsystem. We assume geometric complementarity, i.e., $\mW(R^c) = \mW(R)^c$.

Let us now briefly review an operator algebraic quantum error correction\footnote{In this paper, we follow the convention used in \cite{Pastawski:2016qrs} and consider bulk sites in a bulk geometry and deal with type-$I_\infty$ vN algebras $\mA_x$ on each site and $\mA_X$ in a bulk subregion $X$, where $X$ is a collection of bulk sites. For the studies of operator algebraic quantum error correction based on hyperfinite von Neumann algebras, see for example, \cite{Kribs:2006doz, Kribs:2004dqj,Beny:2007rpx,Beny:2007ewj,Kang2021infiniteOAQEC,Beny:2008loq,Harlow:2016vwg,Faulkner:2020hzi,Furuya:2020tzv,Faulkner:2022ada} and the references therein.} in the context of the pure-state geometries. Consider a code subspace $\mH_c = P_c \mH$ where $P_c$ is a projection from a Hilbert space $\mH$ of a quantum system to $\mH_c$. Following \cite{Pastawski:2016qrs}, given the code subspace $\mH_c$ and logical subalgebra  $\mA$, an algebra of logical operators, a subsystem $\tilde{R}$ of $\mH$ is correctable with respect to $\mA$ against the erasure of $\tilde{R}$ if and only if
\begin{equation}
    [PYP,X] = 0
\end{equation}
for all $ X\in \mA$ and every $Y$ supported on $\tilde{R}$. If so, $\mA$ can be reconstructed on the complementary subsystem $\tilde{R}^c$\footnote{These statements are from lemma $1$ and $2$ in \cite{Pastawski:2016qrs}.}. 

In terms of the reconstruction, we assume the wedge reconstruction hypothesis. That is, if the bulk point $x$ is contained in the wedge $\mW(\tilde{R})$ of the boundary subregion $\tilde{R}$, the logical algebra $\mA_x$ at the point $x$ can be reconstructed on $\tilde{R}$.

Based on the wedge reconstruction hypothesis, the distance for $\mA_x$ is the size of the smallest region $\tilde{R}$ which is not correctable with respect to $\mA_x$. The price is the size of the smallest region $\tilde{R}$ which is correctable with respect to $\mA_x$ and can reconstruct $\mA_x$. These notions can be defined for a set $X$ of bulk sites. We now provide the formal definition of the distance and the price for later purposes.









\begin{definition}[Distance for the pure-state geometries]
    The distance for a spacetime point and a spacetime region in a pure-state geometry $\Sigma$ is defined as 
    \begin{equation}
        \tilde{d}(\mA_x) = \underset{\tilde{R}\subseteq \partial \Sigma:x\notin \mW(\tilde{R}^c)}{min} |\tilde{R}|, \; \tilde{d}(\mA_X) =\underset{x\in X}{min} \; \tilde{d}(\mA_x)
    \end{equation}   
    where $\tilde{R}$ is a boundary subregion of the boundary $\partial \Sigma$ of the pure-state geometry $\Sigma$. Moreover, $\mW(\tilde{R}^c)$ is a wedge in the pure-state geometry homologous to boundary subregion $\tilde{R}^c$.
\end{definition}

\begin{definition}[Price for pure-state geometries]
    The price for a spacetime point and a spacetime region in a pure-state geometry
    \begin{equation}
        \tilde{p}(\mA_x) = \underset{\tilde{R}\subseteq \partial \Sigma :x\in \mW(\tilde{R})}{min} |\tilde{R}|, \; \tilde{p}(\mA_X) = \underset{\tilde{R}\subseteq \partial \Sigma:X\subseteq \mW(\tilde{R})}{min} \tilde{p}(\mA_x)
    \end{equation}   
\end{definition}









\section{Uberholography and tighter bounds on price and distance}\label{sec:uberholography}





The main goal of this section is to discuss uberholography in the context of pure-state geometries. 
Uberholography is a property seen in holographic codes on asymptotically negatively curved bulk manifolds, where the local correctability of erasure errors allows the boundary support of a logical operator to be recursively reduced \cite{Almheiri:2014lwa, Pastawski:2015qua, Pastawski:2016qrs}. 
Its main property is that logical information in a bulk subregion can be supported on a boundary region with a fractal structure. 
This suggests that a $d+1$ dimensional bulk can possibly emerge from degrees of freedom supported on a set of fractal dimension $\alpha < d$.
The universality of the fractal dimension has been studied in various contexts \cite{Pastawski:2016qrs, Bao:2022vxc, Bao:2022tgv, Ageev:2022awq, Bhattacharjee:2024ceb}.

We find that the core behavior remains unchanged when looking at the original asymptotic boundary, but that the recursive hole-punching cannot be carried out within the RT-boundary geodesics themselves. 
We will show that the universal fractal dimension on the original asymptotic boundary is unchanged, while the resulting upper bounds on price and distance depend on whether the RT boundaries are retained or traced out before the asymptotic boundary is fractalized.

\subsection{Review of uberholography}\label{sec:uberholography_review}

Here we review the arguments laid out in \cite{Pastawski:2016qrs} that result in the universal fractal dimension $\alpha$. 
Consider a boundary subregion of the original boundary which we label as $R$ with size $|R|$. 
If we punch a hole $H$ in the middle of $R$, how large can $H$ be before the bulk entanglement wedge of $R_1 \cup R_2 := R\setminus H$ is disconnected? 
(Note here that this $H$ is distinct from the $\Delta H$ in section~\ref{sec:phase_dynamics}).
The hole is punched (see figure~\ref{fig:hole_punch}) so that, with kept fraction $0<r<1$ 
\begin{equation}
    \label{eq:hole_punch}
    |H| = (1-r) |R|, \qquad |R_1| = |R_2| = \frac{r}{2}|R|.
\end{equation} 
Now we want to calculate the minimum $r$ (i.e., largest hole) such that the entanglement wedge of $R_1 \cup R_2$ remains connected.\footnote{Here we switch from the disk of section~\ref{sec:phase_dynamics} to the metric on the upper-half plane, such that boundary anchored geodesics have length $L= 2\log\lp|R|/\ep\rp$ where we've set $L_{AdS}=1$ and $\ep$ is a short-distance UV cutoff. The figures of subsection~\ref{sec:purified_uberholography} return to the disk.}
The two candidate surfaces have lengths 
\begin{equation}
    L_\text{conn} = 2\log \frac{|R|}{\ep} + 2\log \frac{(1-r) |R|}{\ep}, \qquad L_\text{disc} = 4 \log \frac{(r/2) |R|}{\ep}.
\end{equation}
The dominant wedge is given by the smaller total RT surface length \cite{Ryu:2006bv, Ryu:2006ef}, so the critical $r$ occurs at $L_\text{conn} = L_\text{disc}$, i.e. $1-r = r^2/4$, which yields 
\begin{equation}
    \label{eq:hole_ratio}
    r = 2 \lp \sqrt{2}-1 \rp.
\end{equation}
The prescription is then to do this hole punching at this critical $r$ a total of $m$ times until we reach some short distance cutoff $\ep$ such that $\ep = (r/2)^m |R|$. 
The amount of remaining boundary after completing this procedure will be 
\begin{equation}
    \label{eq:hole_punched_remaining}
    |R_\text{min}| / \ep = 2^m = (|R|/\ep)^\alpha,
\end{equation}
where 
\begin{equation}
    \label{eq:fractal_dim}
    \alpha = \frac{\log 2}{\log(2/r)} \approx 0.786
\end{equation}
is the universal fractal dimension and $R_\text{min}$ is the resulting union of remaining boundary subregions. 

In this case the universal fractal dimension is defined via \eqref{eq:hole_punched_remaining}. 
In our considerations below, however, $r$ will depend on which recursive step we are on and thus 
$\ep = \prod_{i=1}^m (r_i/2) |R| \ne (r/2)^m |R|$ and so the analysis above breaks down as there is no longer a closed form for $m$ in terms of some constant $r$. We see numerically that the kept fraction converges to the expected value ($r_i \to r_\infty =2(\sqrt{2}-1)$ as $i\to\infty$) in the limit of recursive steps. 
This convergence leads us to define the asymptotic fractal dimension 
\begin{equation}
    \label{eq:fractal_dim_def} 
    \alpha:= \frac{\log2}{\log(2/r_\infty)},
\end{equation}
where $r_\infty$ is the kept fraction (i.e., $|H| = (1-r_\infty) |R|$) in the asymptotic limit of the number of hole punches. 
We will also present numerical evidence (see figure~\ref{fig:hole_punch_ratio}) that in these pure-state geometries the kept fraction $r$ at a given step is upper bounded by \eqref{eq:hole_ratio} 
and thus with this asymptotic definition of $\alpha$ we have 
\begin{equation}
    \label{eq:hole_punched_remaining_new}
    |R_\text{min}| / \ep = 2^m \le (|R|/\ep)^\alpha.
\end{equation}

\begin{figure}[h]
    \centering
    \includegraphics[width=1.\linewidth]{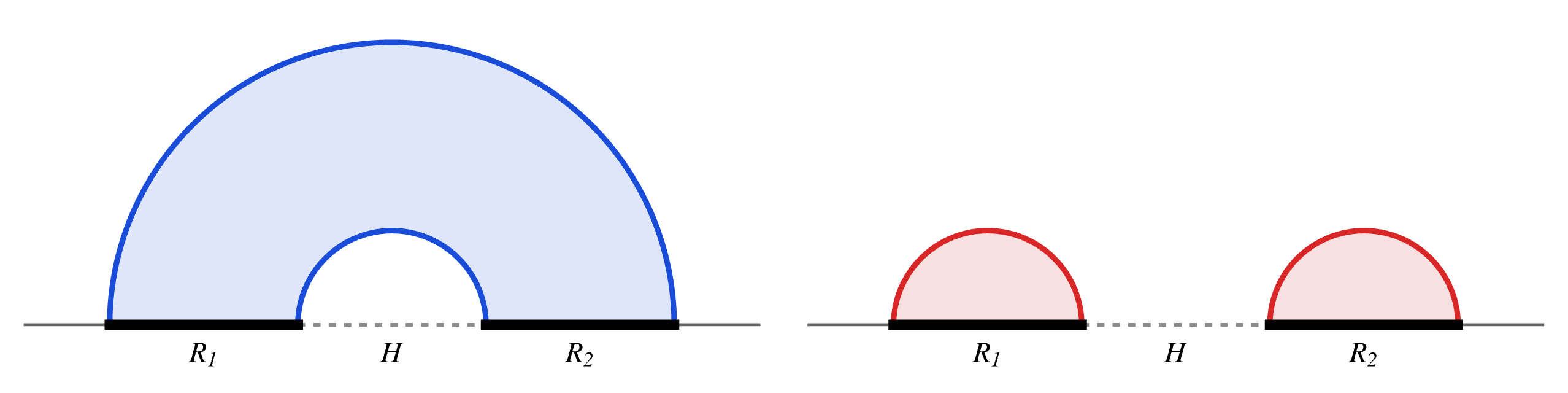}
    \caption{The two competing entanglement wedge candidates when punching hole $H$ into the middle of $R$ with $R_1 \cup R_2 = R \setminus H$. The left is the connected candidate and the right is the disconnected.}
    \label{fig:hole_punch}
\end{figure}

\subsection{Uberholography for pure-state geometries}\label{sec:purified_uberholography}

Uberholography relies on the possibility of two disjoint boundary subregions having a connected entanglement wedge and takes this to the extreme.
However, disjoint subregions lying on a single RT-boundary surface do not have competitive connected wedges and thus cannot exhibit this fractal-like structure. 
As the bulk metric is unaltered throughout the purification, an RT-boundary coincides with a bulk geodesic and the minimal surface homologous to any sub-arc of it is that sub-arc itself. 
Punching a hole contained on a single RT-boundary geodesic therefore yields a replacement geodesic of the same length as the removed segment. 
The connected candidate degenerates onto the boundary, so $L_\text{conn} \ge L_\text{disc}$ for every hole size, no connected wedge competes, and the recursion cannot start. 
See figure~\ref{fig:no_supadd}. 
We emphasize that this is a statement about sub-arcs of a single RT-boundary geodesic. 
A hole that punches out a corner is instead replaced by a strictly shorter corner-cutting geodesic, and holes punched on several RT-boundary geodesics do exhibit genuine phase transitions as shown in section~\ref{sec:examples}.
The difference is that there is no recursion within an RT-boundary geodesic which is required by uberholography. 

\begin{figure}[ht]
    \centering
    \includegraphics[width=1.\linewidth]{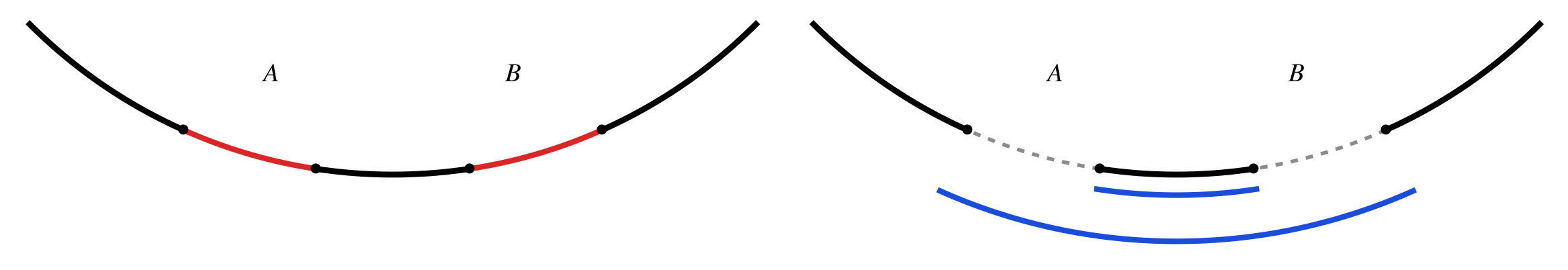}
    \caption{The lack of a competitive connected entanglement wedge along an RT-boundary. 
    Left: the `disconnected' candidate in red.  
    Right: the `connected' candidate in blue.
    Note that by inclusion, the connected surface will always be longer than the disconnected surface. }
    \label{fig:no_supadd}
\end{figure}

While fractal structure is lost on the RT boundaries, any asymptotic boundary left untouched still has this property (as this boundary is not itself a minimal surface in the bulk). 
It is possible, however, that the qualities of this fractal structure are altered by the purification. 

As an example, consider the pure-state geometry with the bulk region $X$ shown in figure~\ref{fig:bulk_subregion} that we would like to determine improved bounds on price and distance for.
The precise configuration, which underlies all quantitative statements in this subsection, is specified in appendix~\ref{app:config}.
As our initial guess of the price we might remove half of the boundary that does not contain any of $X$.
\begin{figure}[ht]
    \centering
    \includegraphics[width=0.5\linewidth]{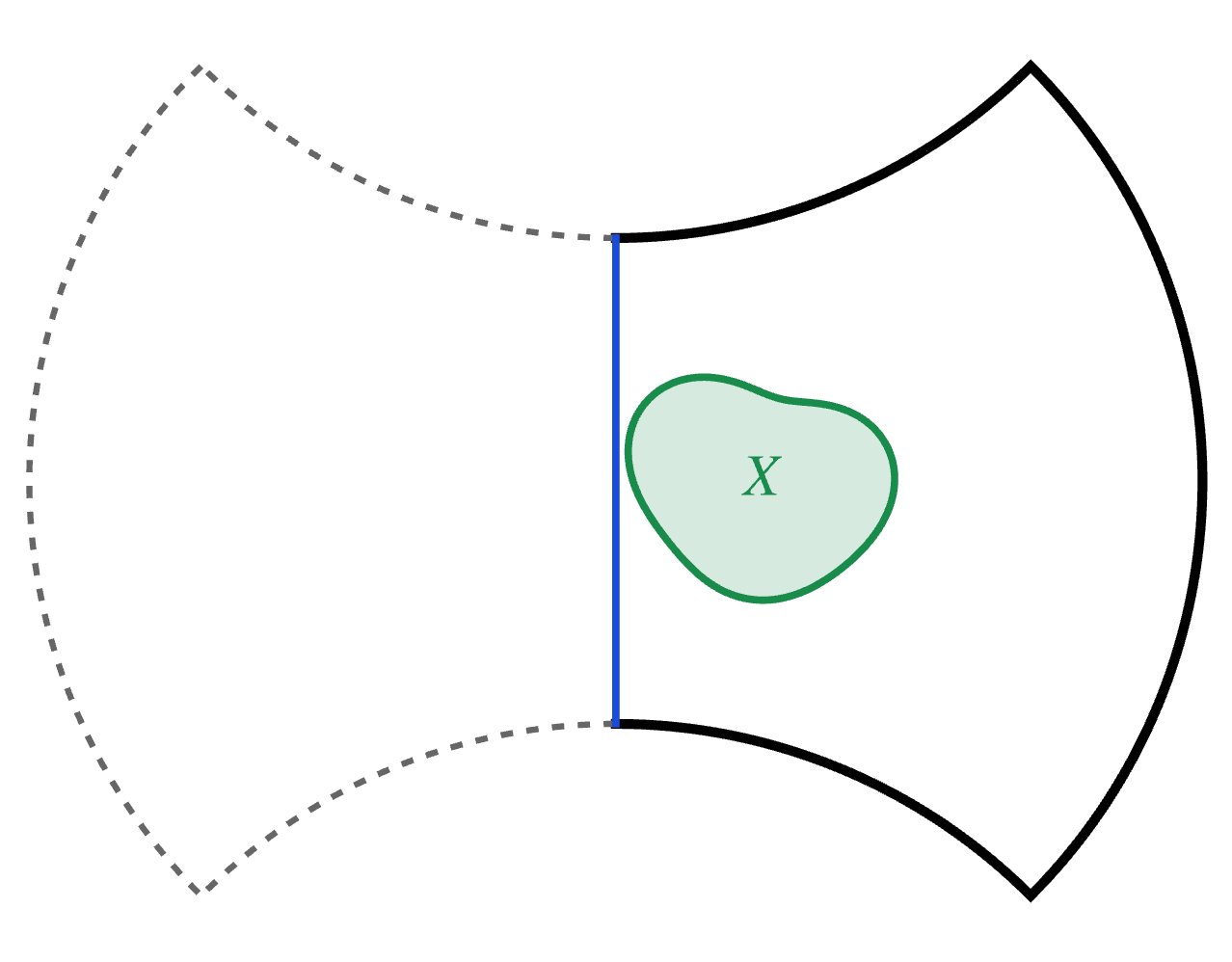}
    \caption{Example of a bulk region $X$ in a  pure-state geometry with an initial guess for the price. The left side of the boundary has been traced out (shown as a dashed line) while the remaining entanglement wedge of the remaining boundary still contains $X$.}
    \label{fig:bulk_subregion}
\end{figure}
This is clearly sub-optimal and as such we can proceed with punching holes on some combination of the asymptotic and RT boundaries. 
The question is then which we should start with. 
As the goal is to find a minimal size boundary subregion that contains $X$ in its entanglement wedge, the size $|\tilde{R}|$ of a boundary subregion is measured as follows.
We count lattice sites, one site per unit proper length (in AdS units) along the regulated subsystem boundary. 
An asymptotic interval of length $\ell$, regulated at $|z| = 1 - \ep$, has proper length $\sim \ell/\ep$ and so contributes $\sim \ell/\ep$ sites, whereas an RT-boundary geodesic of the same span is a bulk geodesic of regulated proper length $\sim\log(\ell/\ep)$ so it contributes only at subleading order compared to the asymptotic interval.
Thus as $\ep\to0$, the asymptotic boundary will dominate the price and distance and so we can focus on the procedure that most minimizes the remaining asymptotic boundary.
Each procedure below exhibits a feasible boundary region whose entanglement wedge contains $X$, hence an upper bound on the price. 
At each step the hole is taken as large as possible subject to two requirements, that the entanglement wedge remains connected and that $X$ remains inside of it.
We assume throughout that $X$ sits deep enough in the bulk that connectivity is the binding requirement, so each hole is taken right at the connected-disconnected phase transition (as in figures~\ref{fig:naive_uberholography} and~\ref{fig:better_uberholography}).
Since the holes shrink as more steps are taken, only the first few steps approach $X$. 

We contrast two possible procedures to emphasize that quantities such as price and distance are procedure dependent while the fractal dimension is not, as explained below.
Both procedures begin from the opening of figure~\ref{fig:bulk_subregion} and follow the same rules.
At step $i$, every remaining asymptotic interval has a new hole punched out at its center with a common kept fraction $r_i$ fixed by the phase transition, and the recursion stops once the remaining intervals reach the lattice spacing $\ep$.

\begin{itemize}
    \item Procedure 1. Holes are punched directly on the asymptotic boundary while the RT-boundary segments are kept throughout (figure~\ref{fig:naive_uberholography}).
    
    \item Procedure 2. The RT-boundary segments are first traced out as far as connectivity allows.
    The standard uberholography of section~\ref{sec:uberholography_review} is then applied to the remaining asymptotic boundary (figure~\ref{fig:better_uberholography}).
\end{itemize}

In Procedure 1 the competing RT-surfaces differ from those in the original uberholography, which gave a constant $r=2( \sqrt{2} - 1)$ for each hole punch. 
The first hole punch, shown in figure~\ref{fig:naive_uberholography}, is precisely the BBbb configuration of example~\ref{ex:bbpp} thus the critical kept fraction $r_1$ follows directly from theorem~\ref{thm:main}.

\begin{figure}[ht]
    \centering
    \includegraphics[width=.9\linewidth]{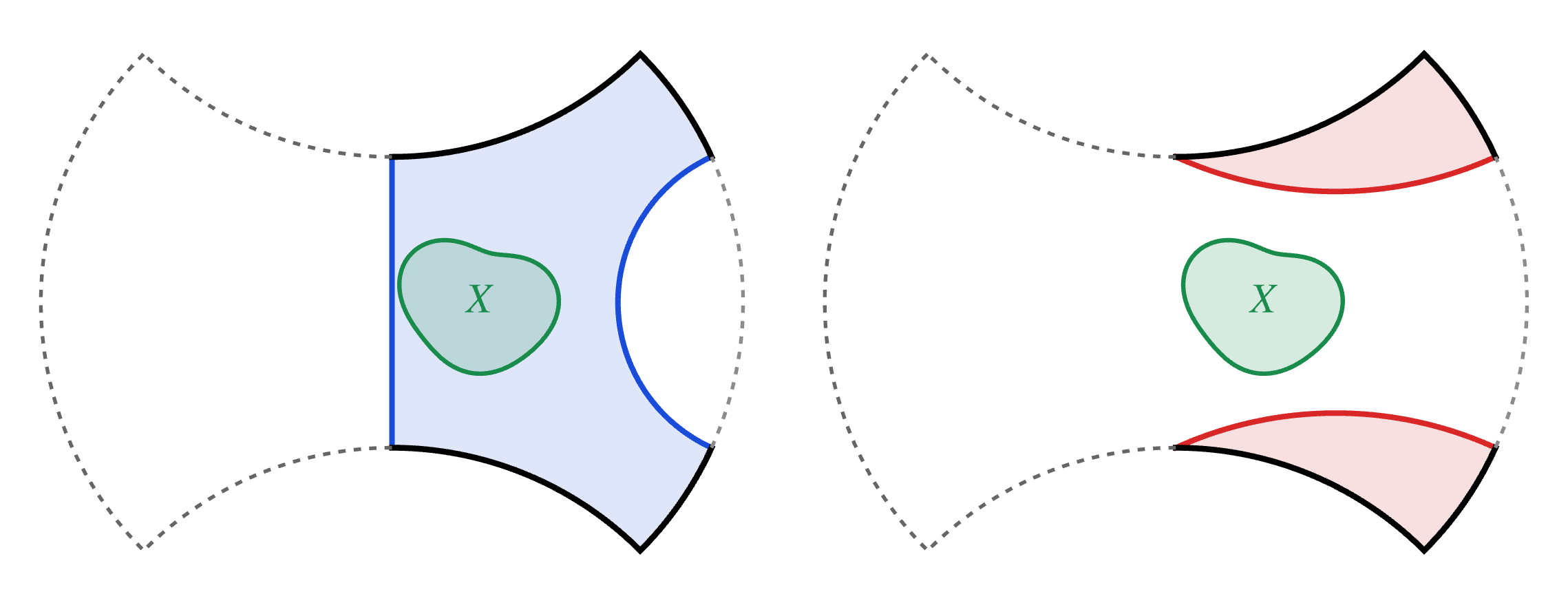}
    \caption{First hole punch for the  procedure starting on the asymptotic boundary. 
    The two competing minimal surfaces differ from the original case and thus the initial $r$ is different.
    The size of the punched region shown is right before/after the transition point for the left/right figures.}
    \label{fig:naive_uberholography}
\end{figure}

In Procedure 2, shown in figure~\ref{fig:better_uberholography}, the competing minimal surfaces become those of the original uberholography treatment and we recover the fractal dimension $\alpha = \frac{\log2}{\log(2/r_\infty)}$ with the constant kept fraction $r_i = r_\infty = 2(\sqrt{2}-1)$ at every step.
This is the reference sequence that saturates \eqref{eq:hole_punched_remaining_new}.

There are a few characteristics to note here. 
In Procedure 1 (figure~\ref{fig:naive_uberholography}) we can see that the initial punched hole is much larger than that in Procedure 2 (figure~\ref{fig:better_uberholography}) corresponding to a smaller first kept fraction $r_1 \approx 0.476$ as compared to $r_\infty = 2(\sqrt{2}-1) \approx 0.828$ (see appendix~\ref{app:config} for the exact value and configuration).
This highlights that the purification provides additional protection of the bulk information against erasure.
The retained RT-boundary geodesics provide additional redundancy, so more of the asymptotic boundary can be erased while still reconstructing $X$.
While this initial punch is much larger, as we proceed iteratively the hole punch fraction $1-r_i$ in Procedure 1 actually decreases (so $r_i$ increases) and asymptotes to the constant $r_\infty=2\lp \sqrt{2} - 1\rp$ (see figure~\ref{fig:hole_punch_ratio}).
Both the bound $r_i \le r_\infty$ and the convergence $r_i \to r_\infty$ are numerical observations in this configuration.
While we do not have an analytic proof, this is what one expects heuristically since later holes are exponentially small and thus far from the b-type anchors so the process locally approaches the original boundary competition of section~\ref{sec:uberholography_review}.
Thus the fractal dimension $\alpha$, with our asymptotic definition, remains the same.

\begin{figure}[ht]
    \centering
    \includegraphics[width=.9\linewidth]{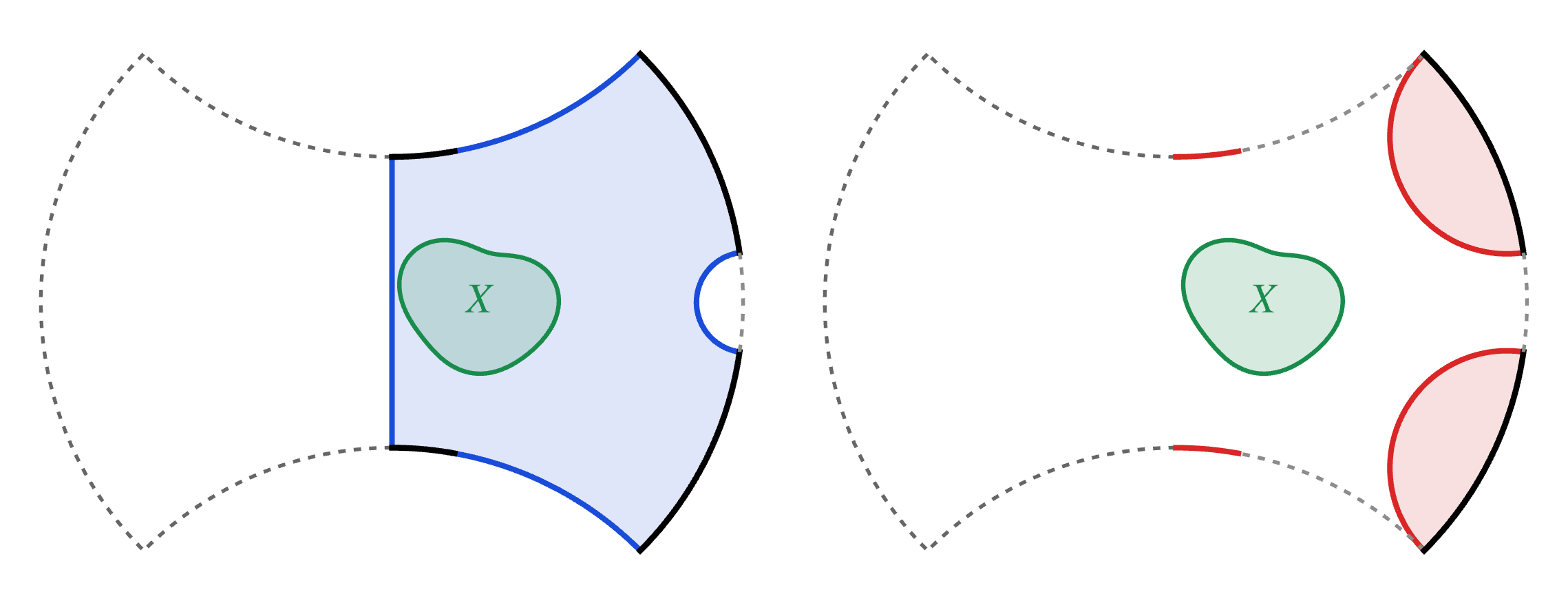}
    \caption{First hole punch for an alternative uberholography procedure.
    The two competing minimal surfaces are now the same as the original case.
    The size of the punched region shown is right before/after the transition point for the left/right figures.
    The RT-boundary segments have been opened to the critical opening fraction of appendix~\ref{app:config}, leaving the short RT-boundary remainders visible near $p_\pm$.}
    \label{fig:better_uberholography}
\end{figure}

\begin{figure}[ht]
    \centering
    \includegraphics[width=.7\linewidth]{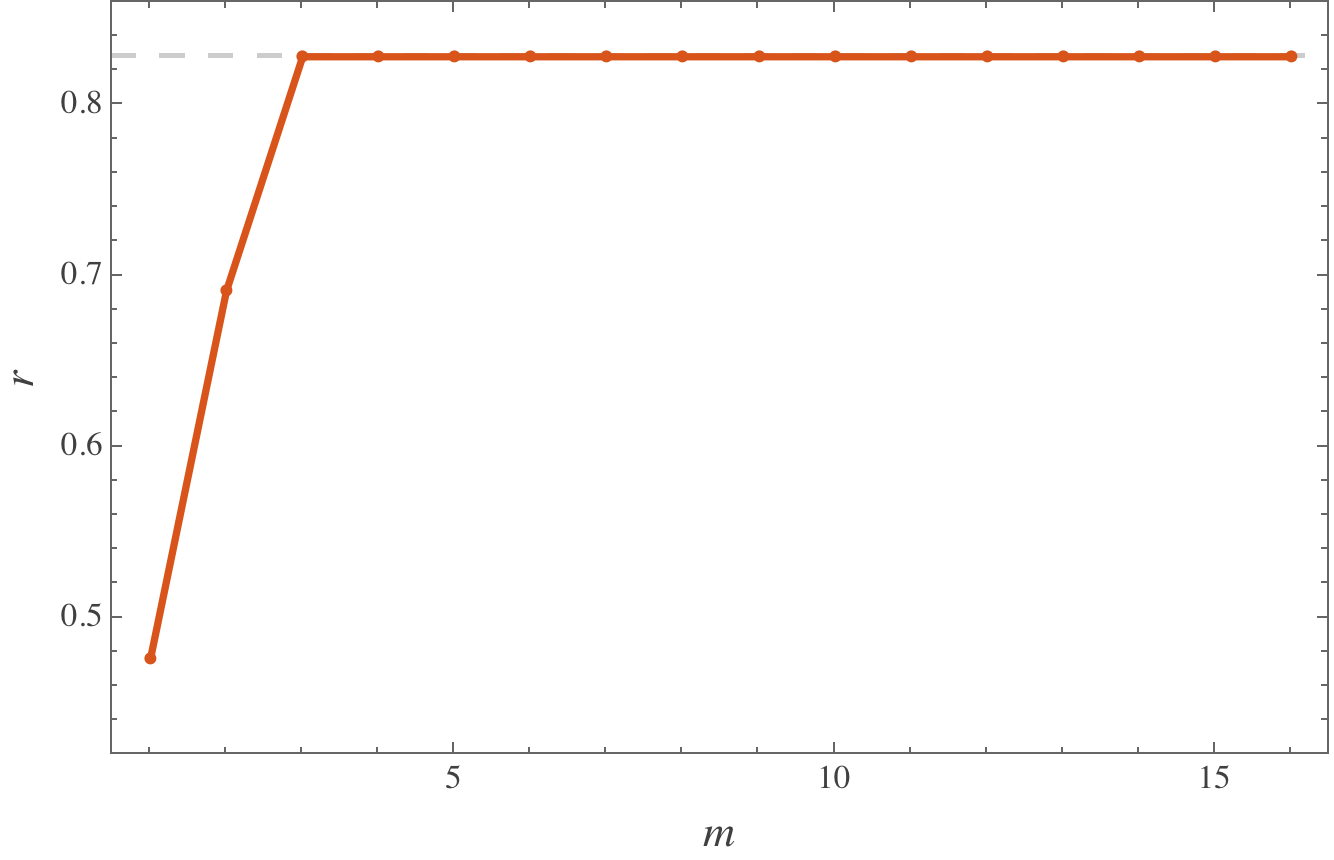}
    \caption{A plot of the ratio of kept region $r$ with respect to step number $m$ for Procedure 1. The grey dashed line corresponds to $r_\infty = 2(\sqrt{2}-1)$, which the ratio attains from the third step on. All sixteen steps entering figure~\ref{fig:total_remaining_boundary} are shown, and the values are tabulated in table~\ref{tab:kept_fractions} of appendix~\ref{app:config}. }
    \label{fig:hole_punch_ratio}
\end{figure}

The two procedures need not give the same bound on the price and distance, and the comparison is fixed by
\eqref{eq:hole_punched_remaining_new}.
Procedure 1 keeps a smaller fraction $r_i \le r_\infty$ at each step and reaches the cutoff in no more steps, so it never leaves more asymptotic boundary than Procedure 2.
Both procedures asymptote to the same kept fraction $r_\infty$ and so share the fractal dimension $\alpha$ of \eqref{eq:fractal_dim_def}.
The remaining question then is how much tighter Procedure 1 is.
Procedure 2 punches holes at the constant critical rate and reproduces $|R_\text{min}|/\ep = (|R|/\ep)^\alpha$, while Procedure 1 removes more at each early step.
This comparison can be made precise, in the form of a proposition whose hypotheses are the numerically observed properties of the kept fractions.

\begin{proposition}\label{prop:comparison}
    Suppose the kept fractions $r_i$ of Procedure 1 satisfy
    (i) $r_i \le r_\infty$ for all $i$,
    (ii) $r_1 < r_\infty^2$, and
    (iii) the product $\prod_{i= 1}^{m_1} (r_\infty/r_i)< 2/r_\infty$.
    For every cutoff $\ep$ small enough that $m_1 \ge 1$ below, Procedure 1 leaves strictly less asymptotic boundary than Procedure 2,
    \begin{equation}
        |R_\text{min}^{(1)}| < |R_\text{min}^{(2)}|.
    \end{equation}
\end{proposition}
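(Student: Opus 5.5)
We model each procedure as a recursion on the starting asymptotic length and count what is left when it stops. Procedure 1 starts from asymptotic length $|R|$. At step $i$ each remaining interval of length $\ell$ is replaced by two intervals of length $(r_i/2)\ell$. After $m$ steps there are $2^m$ intervals, each of length $|R|\prod_{i=1}^m (r_i/2)$. The recursion stops at the first $m_1$ for which this length reaches the lattice scale $\ep$. Then
\begin{equation}
 |R^{(1)}_\text{min}| = 2^{m_1}\,|R|\prod_{i=1}^{m_1}\frac{r_i}{2} = |R|\prod_{i=1}^{m_1} r_i .
\end{equation}
Procedure 2 starts from the same asymptotic interval, up to the subleading RT remainders, which we drop as in the counting argument above. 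It keeps the constant fraction $r_\infty$ and stops at $m_2$, giving $|R^{(2)}_\text{min}| = |R|\, r_\infty^{m_2}$.

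The proof therefore reduces to comparing $\prod_{i=1}^{m_1} r_i$ with $r_\infty^{m_2}$. First we relate $m_1$ to $m_2$. By hypothesis (i), the Procedure 1 interval lengths $|R|\prod_{i\le m}(r_i/2)$ are at most $|R|(r_\infty/2)^m$ at every $m$. Hence Procedure 1 reaches the cutoff no later than Procedure 2, so $m_1\le m_2$. Next we need a lower bound $m_1 \ge m_2-1$. It would come from minimality of the stopping rule: at step $m_1-1$ the Procedure 1 length was still above $\ep$. Combining this with hypothesis (iii), and using that the product over $i\le m_1-1$ is at most the product over $i\le m_1$, gives
\begin{equation}
 (r_\infty/2)^{m_1-1}\cdot\frac{r_\infty}{2}\cdot\frac{2}{r_\infty} > \frac{\ep}{|R|}\cdot(\text{const}).
\end{equation}
This shows that the Procedure 2 length after $m_1+1$ steps is already below $\ep$, so $m_2\le m_1+1$. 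This stopping-time bookkeeping, in particular fixing the precise convention for "reaches the cutoff" so that hypothesis (iii) closes the gap of one step, is where I expect the main care to be needed.

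With $m_1\in\{m_2-1,m_2\}$ the comparison is a short case split.

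\emph{Case $m_1=m_2$.} Hypotheses (i) and (ii) give $r_1<r_\infty^2<r_\infty$ and $r_i\le r_\infty$ for every $i$. Since $m_1\ge1$, this yields $\prod_{i\le m_1} r_i < r_\infty^{m_1} = r_\infty^{m_2}$.

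\emph{Case $m_1=m_2-1$.} Here (ii) is exactly what is needed:
\begin{equation}
 \prod_{i\le m_1} r_i \le r_1\, r_\infty^{m_1-1} < r_\infty^{m_1+1} = r_\infty^{m_2}.
\end{equation}

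Multiplying by $|R|$ in either case gives $|R^{(1)}_\text{min}|<|R^{(2)}_\text{min}|$, as claimed.
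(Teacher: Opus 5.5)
Your reduction to comparing $\prod_{i\le m_1} r_i$ with $r_\infty^{m_2}$ is fine, as are the bound $m_1\le m_2$ from (i) and the final case split. The gap is in the one step where hypothesis (iii) is actually needed, namely $m_2\le m_1+1$.

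You start from minimality of the stopping rule, i.e.\ $|R|\prod_{i\le m_1-1}(r_i/2)>\varepsilon$. This is a \emph{lower} bound on Procedure~1 lengths, and combining it with (i) or (iii) can only produce lower bounds on Procedure~2 lengths. Indeed, the left-hand side of your displayed inequality simplifies to $(r_\infty/2)^{m_1-1}$, so the display asserts that this exceeds $\varepsilon$ up to an unspecified constant. That is the direction $m_2\ge m_1$, which you already had. No lower bound can show that Procedure~2's intervals have dropped below $\varepsilon$ by step $m_1+1$.

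What is needed is the other side of the stopping rule. Hypothesis (iii) is equivalent to $(r_\infty/2)^{m_1+1}<\prod_{i\le m_1}(r_i/2)$. In your convention ($m_1$ the first step whose interval length is $\le\varepsilon$), $|R|\prod_{i\le m_1}(r_i/2)\le\varepsilon$. Hence $(r_\infty/2)^{m_1+1}|R|<\varepsilon$ and $m_2\le m_1+1$. With this repair your case analysis completes the proof.

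The paper uses the opposite convention: $m_k$ is the \emph{last} step at which the intervals are still $\ge\varepsilon$. It argues $(r_\infty/2)^{m_1+2}|R|<\prod_{i\le m_1+1}(r_i/2)|R|<\varepsilon$, with the second inequality holding by definition of $m_1$.

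The first inequality there actually needs the product in (iii) to run to $m_1+1$ rather than $m_1$. This is harmless in the paper's configuration, where $r_i=r_\infty$ for $i\ge 3$. Still, it means your ``first step at or below the cutoff'' convention is the one in which (iii), as literally stated, closes the one-step gap exactly.

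The paper also dispenses with $m_1\le m_2$ and the case split. From (i) and (ii) it gets $\prod_{i\le m_1} r_i<r_\infty^{m_1+1}\le r_\infty^{m_2}$ in one line, using only $m_2\le m_1+1$ and $r_\infty<1$.
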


Properties (i)–(iii) are numerical observations for the configuration shown in appendix \ref{app:prop_assumptions}, for which $r_1 \approx 0.476 < r_\infty^2 \approx 0.686$.
The sequence $r_i$ is a property of the recursion and does not depend on the cutoff $\ep$, which enters only through the stopping step $m_1$, so no limit is involved in (iii).
Only the first two factors of the product differ from one (table~\ref{tab:kept_fractions}), and so $\prod_{i= 1}^{m_1} (r_\infty/r_i)=(r_\infty/r_1)(r_\infty/r_2) \approx 2.08 < 2/r_\infty = 1 + 1/\sqrt{2} \approx 2.41$.
We do not have an analytic proof of (i)--(iii).

\begin{proof}
Fix $\ep > 0$ and let $m_1, m_2$ be the last steps at which the intervals of Procedures 1 and 2 remain at or above the cutoff, so that
\begin{equation}\label{eq:ep1_ep2}
    \ep \le \ep_1 := \prod_{i=1}^{m_1} \frac{r_i}{2} |R|,
    \qquad
    \ep \le \ep_2 := \lp\frac{r_\infty}{2}\rp^{m_2} |R|,
\end{equation}
while one further step would drop below $\ep$.
The surviving boundary consists of $2^{m_k}$ intervals of size $\ep_k$ (where $k\in\{1, 2\}$), so
\begin{equation}
    |R_\text{min}^{(1)}| = 2^{m_1}\ep_1 = \prod_{i=1}^{m_1} r_i |R|,
    \qquad
    |R_\text{min}^{(2)}| = 2^{m_2}\ep_2 = r_\infty^{m_2} |R|.
\end{equation}
First, by hypothesis (iii),
\begin{equation}
    \lp \frac{r_\infty}{2}\rp^{m_1+2} |R| < \prod_{i=1}^{m_1+1} \frac{r_i}{2}|R| < \ep,
\end{equation}
where the last inequality holds by definition of $m_1$.
Procedure 2 therefore stops by step $m_1 + 1$ so $m_2 \le m_1 + 1$.
Second, by hypotheses (i) and (ii),
\begin{equation}
    \prod_{i=1}^{m_1} r_i < r_\infty^{m_1+1} \le r_\infty^{m_2},
\end{equation}
where the last step uses $m_2 \le m_1 + 1$ and $r_\infty < 1$.
Therefore
\begin{equation}
    |R^{(1)}_\text{min}| = \prod_{i=1}^{m_1} r_i |R| < r_\infty^{m_2}|R| = |R_\text{min}^{(2)}|.
\end{equation}
\end{proof}

Given the hypotheses, Procedure 1 thus leaves strictly less asymptotic boundary at every sufficiently small cutoff, so its upper bounds on the price and distance are strictly tighter than Procedure 2's. Figure~\ref{fig:total_remaining_boundary} shows this, with Procedure 1's remaining boundary lying uniformly below Procedure 2's.
The RT-boundary retained in Procedure 1 adds only a subleading $\sim\log(\ell/\ep)$ to $|\tilde{R}|$ and does not change the comparison.

\begin{figure}[ht]
    \centering
    \includegraphics[width=.7\linewidth]{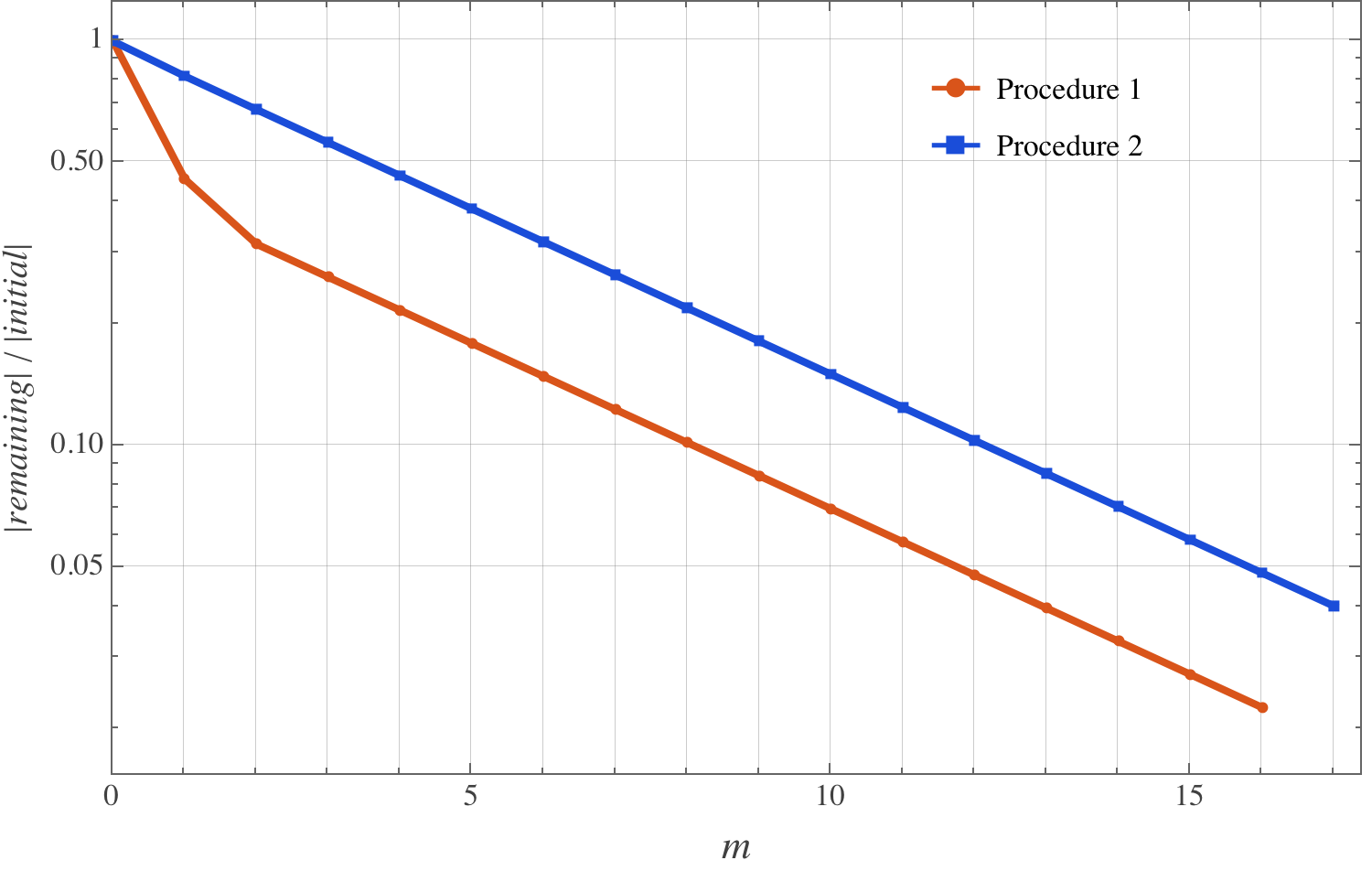}
    \caption{
    A plot of the total remaining asymptotic boundary (as a ratio of the initial boundary size) with respect to step number $m$ for both procedures with $\ep = 10^{-6}$ in units of arclength along the unit circle.
    In this configuration, Procedure 1's intervals first drop below the cutoff at its 16th step while Procedure 2's do so at its 17th, corresponding to $m_1 = 16$ and $m_2 = 17$ in the convention of \eqref{eq:ep1_ep2}. See appendix~\ref{app:config}.}
    \label{fig:total_remaining_boundary}
\end{figure}
As previously discussed, because the asymptotic boundary will dominate the price as $\ep\to0$, this implies that fractalizing the asymptotic boundaries before considering the RT boundaries yields tighter upper bounds on price and distance.
The intuition here is that the RT boundaries have already been optimized and distilled \cite{Bao:2023til} while the asymptotic boundary has not.
We do not have a proof of this expectation and leave it as an open question.

\section{Discussion}\label{sec:discussion}

We have studied the entanglement wedge phase structure and the code properties of pure-state holographic geometries in which boundary subregions have been replaced by their RT geodesics via the surface/state correspondence.
Our results fall into two parts.

First, in section~\ref{sec:phase_dynamics} we reduced the connected/disconnected phase transition of two holes to a single cross-ratio threshold relation, $\eta'/\eta = e^{\Delta H/2}$, in which the standard two-interval threshold is shifted by $\Delta H$ which is determined by geodesics anchored only on the RT-boundary.

Second, in sections~\ref{sec:price_and_distance} and \ref{sec:uberholography} we used this phase structure to study price, distance, and uberholography.
The recursion underlying uberholography cannot run within an RT-boundary geodesic, since such a boundary is already a bulk minimal surface and admits no competitive connected wedge.
We find numerically that the universal fractal dimension $\alpha \approx 0.786$ of the asymptotic boundary is unaffected by the purification, while the price and distance depend on whether the RT boundaries are retained or traced out before the asymptotic boundary is fractalized.
In the configurations we study, fractalizing the asymptotic boundary while retaining the RT boundaries gives tighter bounds, consistent with the intuition, which remains unproven, that the RT-boundary has already been distilled.
That $\alpha$ survives the purification while coarse quantities such as the price and distance do not is a pattern that aligns with the behavior found for fractal boundary erasures under bulk deformations studied in \cite{Bao:2022vxc, Bhattacharjee:2024ceb}.

Several possible future directions remain open. Throughout we assumed exact entanglement wedge reconstruction which can be relaxed to approximate holographic codes \cite{Furuya:2021lgx}.
This would connect the Singleton gap $\delta_X = \tilde p(\mA_X) - \tilde d(\mA_X) - \tilde{k}_X$ to the breakdown of complementary recovery.
In the operator algebraic language, complementary recovery is closely tied to Haag duality for the boundary subregion algebras, and it is natural to ask whether a nonvanishing Singleton gap for the code of a pure-state geometry can be traced to a controlled failure of Haag duality, with the gap playing a role analogous to a Jones index for the inclusion of reconstructible algebras.
We leave a precise formulation of this correspondence to future work.
It is also of interest to extend the cross-ratio threshold relation beyond the bipartite setup considered here and to ask whether there is a quantity like $\Delta H$ in multipartite contexts, for instance for the multipartite entanglement of purification \cite{Umemoto:2018jpc, Bao:2018gck}.
During the completion of this work, a related paper introducing the entanglement wedge polygon (EWP) appeared~\cite{Fujiki:2026ucr}. The EWP is a codimension-one volume bounded by RT surfaces for pure states and entanglement wedge cross sections for mixed states. It is very closely related to the entanglement arena studied here and offers a complementary, volume-based perspective on these holographic geometries.

\section*{Acknowledgments}
We would like to thank Joydeep Naskar for discussions. N.\,B. is supported by the DOE Office of Science-ASCR, in particular the grand Novel Quantum Algorithms from Fast Classical Transforms and by Northeastern University. K.\,F. acknowledges support from Professor Ning Bao at Northeastern University. J.\,M is supported by a graduate assistantship from Northeastern University.

\appendix

\section{Numerical configuration for section~\ref{sec:uberholography}}\label{app:config}

Here we specify the configuration underlying the quantitative statements of section~\ref{sec:purified_uberholography}, so that figures~\ref{fig:hole_punch_ratio} and \ref{fig:total_remaining_boundary} and the quoted kept fractions can be reproduced.

\subsection{Setup}
The pure-state geometry is symmetric under reflection across both axes of the disk.
Its corners sit at $e^{\pm i\pi/4}$ and $e^{\pm 3i\pi/4}$.
The original asymptotic region is the right-hand arc $R = \{e^{i\phi} : \phi \in (-\pi/4, \pi/4)\}$, of arclength $|R| = \pi/2$, together with its mirror image on the left.
The RT-boundary geodesics are the two bulk geodesics connecting $e^{i\pi/4}$ to $e^{3i\pi/4}$ (the unit-radius circle centered at $i\sqrt{2}$) and $e^{-i\pi/4}$ to $e^{-3i\pi/4}$ (its reflection).
The entanglement wedge cross section between them is the segment of the imaginary axis with feet
\begin{equation}
    p_\pm = \pm i \lp\sqrt{2}-1\rp,
\end{equation}
for which $\rho(p_+, p_-) = 1$ and the EWCS length is $\arccosh(3) = 2\log(1+\sqrt{2})$.

The initial opening of figure~\ref{fig:bulk_subregion} traces out the left asymptotic arc together with the segments of both RT-boundary geodesics beyond the EWCS feet.
The retained subsystem is therefore $R$ together with the RT-boundary segments running from $e^{\pm i\pi/4}$ to $p_\pm$, and the bulk region $X$ lies inside the wedge bounded by $R$, these segments, and the EWCS.
Procedure 2 opens the RT-boundary segments further, up to the critical fraction beyond which the entanglement wedge would disconnect, so its wedge remains bounded by the EWCS and feasibility only requires $X$ to lie on the same side of the EWCS as $R$.
The remaining placement constraint is the standing assumption that $X$ avoids every hole cap, the deepest of which is the first bite of Procedure 1, which at criticality reaches $x = (1-\sin\theta_1)/\cos\theta_1 \approx 0.643$ on the real axis.
The region drawn in figures~\ref{fig:bulk_subregion}, \ref{fig:naive_uberholography}, and \ref{fig:better_uberholography} satisfies both requirements.

The first hole of Procedure 1 is punched at the center of $R$, with regulated B-type anchors $e^{\pm i\theta}(1-\ep)$ and b-type anchors $p_\pm$.
This is the B--B--b--b configuration of example~\ref{ex:bbpp} with $\Delta H = h(1) = 2\log(1+\sqrt2) - \log 4$, so theorem~\ref{thm:main} places the transition at $\eta'/\eta = (1+\sqrt{2})/2$.
The threshold condition reduces to $\sin\theta_1 = \sqrt{2}-1$, giving the critical half-angle in closed form,
\begin{equation}
    \theta_1 = \arcsin\lp\sqrt{2}-1\rp \approx 0.427079
\end{equation}
The kept fraction is measured in the upper-half-plane coordinate $w = -i(z-1)/(z+1)$, which acts on the boundary as $\phi \mapsto \tan(\phi/2)$ and maps $R$ to the real interval $(-\tan(\pi/8), \tan(\pi/8))$.
In this coordinate
\begin{equation}
    r_1 = 1 - \frac{\tan(\theta_1/2)}{\tan(\pi/8)} \approx 0.476489,
\end{equation}
which is the value quoted in section~\ref{sec:purified_uberholography}.
At subsequent steps every remaining interval of $R$ is punched at its center in the $w$ coordinate with the common kept fraction $r_i$ fixed by the dynamic-programming competition described in section~\ref{sec:purified_uberholography}, in which the disconnected length is minimized exactly over all non-crossing pairings of the anchors.
The resulting sequence $r_i$ increases toward $r_\infty = 2(\sqrt{2}-1)$ as shown in figure~\ref{fig:hole_punch_ratio} and tabulated in table~\ref{tab:kept_fractions}.
In both procedures the per-step solve is replaced by the fixed-point value once the solved bite fraction is within $10^{-5}$ of it, which happens after four steps for Procedure 1. 

For Procedure 2 the RT-boundary segments are opened from the corners toward the EWCS up to the critical opening fraction (found to be $\approx 0.771887$), itself solved from the phase transition of the opening, and the kept fraction is then solved step by step with the disconnected competitor taken to be the homology-constrained adjacent pairing of anchors.
The solved fractions coincide with $r_\infty$ to the recursion's $10^{-5}$ tolerance, so the Procedure 2 curve of figure~\ref{fig:total_remaining_boundary} follows the constant-$r_\infty$ sequence used in the text.
The short RT-boundary remainders left by the opening are visible near $p_\pm$ in figure~\ref{fig:better_uberholography}.

\subsection{Results for proposition~\ref{prop:comparison} assumptions}\label{app:prop_assumptions}

\begin{table}[ht]
    \centering
    \begin{tabular}{c l c}
        \hline
        $i$ & \multicolumn{1}{c}{$r_i$} \\
        \hline
        1 & 0.47648905  \\
        2 & 0.69178000  \\
        3 & 0.82842712  \\
        4 & 0.82842712  \\
        5--16 & 0.82842712  \\
        \hline
    \end{tabular}
    \caption{Kept fractions $r_i$ of Procedure 1 in the upper-half-plane for the sixteen steps entering figures~\ref{fig:hole_punch_ratio} and~\ref{fig:total_remaining_boundary}.
    The first four kept fractions were solved via direct comparison of all competing minimal surfaces. 
    Then, in order to compute the number of steps $m_1$ carried out before the $\ep$-cutoff, we fixed $r_i = r_\infty$ for $i\ge 5$ as the direct computation of the competing surfaces became too expensive.
    We find that the step-17 intervals fall below the cutoff $\ep$, giving $m_1 = 16$ in the convention of \eqref{eq:ep1_ep2}.}
    \label{tab:kept_fractions}
\end{table}

Here we show that assumptions (i)-(iii) are satisfied numerically in the setup described above. 
The kept fraction results at each step of Procedure 1 are shown in table~\ref{tab:kept_fractions}. 
The first step is all that is required for satisfying proposition~\ref{prop:comparison} assumption (ii) as $r_1 \approx 0.476 < r_\infty^2 \approx 0.686$.
For assumptions (i) and (iii) we need the additional well supported assumption that the competing connected and disconnected minimal surfaces asymptote to those considered in \cite{Pastawski:2016qrs}.
This has support from the first four steps in this table which are directly computed and show both $r_i \le r_\infty$ and $r_i \to r_\infty$ quickly (by the third step).
Thus with the assumption that this asymptote holds for further steps, we have  both $r_i\le r_\infty$ (assumption (i)) and $\prod_{i_1}^{m_1} r_\infty/r_i \approx 2.08 < 2/r_\infty \approx 2.41$ (assumption (iii)).

\bibliographystyle{JHEP}
\bibliography{main}

\end{document}